\documentclass[12pt]{amsart}
\usepackage[applemac]{inputenc}
\usepackage[titletoc]{appendix}
\usepackage{bm}
\usepackage{blindtext}
\usepackage{amssymb}
\usepackage{amsfonts}
\usepackage{amsthm}
\usepackage{tikz}
\usepackage{tikz-cd}
\usepackage{caption}
\usepackage{color}
\usepackage{graphicx}
\usepackage{xcolor}
\usepackage{shadethm}
\usepackage{subfigure}

\usepackage{epigraph}


 
\usepackage{enumerate}
\usepackage{verbatim}
\usetikzlibrary{trees}
\usetikzlibrary{decorations.markings}
\usetikzlibrary{arrows}
\usepackage{mathrsfs}
\usepackage[margin=1.0in]{geometry}
\usepackage{xparse}

\usetikzlibrary{positioning,arrows,patterns}
\usetikzlibrary{decorations.markings}
\usetikzlibrary{calc}

\NewDocumentCommand\semiloop{O{black}mmmO{}O{above}}
{%
\draw[#1] let \p1 = ($(#3)-(#2)$) in (#3) arc (#4:({#4+180}):({0.5*veclen(\x1,\y1)})node[midway, #6] {#5};)
}


\newcommand{\Ker} {\mathrm{ker}}
\newcommand{\w} {\omega}
\newcommand{\TM}          {T^*M}
\newcommand{\Lie}         {\mathcal L}

\usepackage{url}
\usepackage{hyperref}
\hypersetup{colorlinks=false, linktocpage, linkbordercolor=red, citebordercolor=green, filebordercolor=red, urlbordercolor=cyan, pdfborderstyle={/S/U/W 1}}


\theoremstyle{plain}
\newtheorem{thm}{Theorem}[section]
\newtheorem{lem}[thm]{Lemma}
\newtheorem{rmk}[thm]{Remark}

\newtheorem{prop}[thm]{Proposition}

\newtheorem{defn}{Definition}[section]

\theoremstyle{remark}
\newtheorem{rem}{Remark}[section]

\newtheorem{ex}{Example}[section]








\newcommand{\pth}         {\mathcal{P}}






%
%
%










\def\gpd{\,\lower1pt\hbox{$\longrightarrow$}\hskip-.24in\raise2pt
               \hbox{$\longrightarrow$}\,}

\newcommand{\F}{\mathcal{F}}

\makeatletter

\pgfdeclareshape{genus}{
 \anchor{center}{\pgfpointorigin}
\backgroundpath{
    \begingroup
    \tikz@mode
    \iftikz@mode@fill

         \pgfpathmoveto{\pgfqpoint{-0.78cm}{-.17cm}}
     \pgfpathcurveto %
            {\pgfpoint{-0.35cm}{-.44cm}}
        {\pgfpoint{0.35cm}{-.44cm}}
        {\pgfpoint{.78cm}{-0.17cm}} 
     \pgfpathmoveto{\pgfqpoint{-0.78cm}{-0.17cm}}
     \pgfpathcurveto %
            {\pgfpoint{-0.25cm}{.25cm}}
        {\pgfpoint{.25cm}{.25cm}}
        {\pgfpoint{0.78cm}{-0.17cm}}
        \pgfusepath{fill}
        \fi

        \iftikz@mode@draw
     \pgfpathmoveto{\pgfqpoint{-1cm}{0cm}}
     \pgfpathcurveto %
            {\pgfpoint{-0.5cm}{-.5cm}}
        {\pgfpoint{0.5cm}{-.5cm}}
        {\pgfpoint{1cm}{0cm}}

     \pgfpathmoveto{\pgfqpoint{-0.75cm}{-0.15cm}}
     \pgfpathcurveto %
            {\pgfpoint{-0.25cm}{.25cm}}
        {\pgfpoint{.25cm}{.25cm}}
        {\pgfpoint{0.75cm}{-0.15cm}}
              \pgfusepath{stroke}
        \fi
        \endgroup
    }
    }

    \makeatother

\begin{document}

\title{Poly-symplectic geometry and the AKSZ formalism}
\author[I. Contreras]{Ivan Contreras}
\author[N. Martinez]{Nicolas Martinez Alba}
\email[I.~Contreras]{icontreraspalacios@amherst.edu}
\email[N. Martinez]{nmartineza@unal.edu.co}
\maketitle

\begin{abstract}
In this paper we extend the AKSZ formulation of the Poisson sigma model to more general target spaces, and we develop the general theory of graded geometry for poly-symplectic and poly-Poisson structures. In particular, we prove a Schwarz-type theorem and transgression for graded poly-symplectic structures, recovering the action functional and the poly-symplectic structure of the reduced phase space of the poly-Poisson sigma model, from the AKSZ construction.

\end{abstract}


\section{Introduction}
A covariant formulation of Hamiltonian field theory goes back to the works of de Donder \cite{deDonder} and Carath\'eodory \cite{Caratheodory}. In this formulation, the cotangent bundle as the phase space is replaced with the co-jet bundle, which comes equipped with a canonical poly-symplectic structure. This structure is a natural extension of the Liouville form of the cotangent bundle $T^*M$ to the Whitney sum $\oplus_{i=1}^k T^*M$. This covariant formulation has proven to be successful in describing certain classes of field theories, such as systems of $N$ classical particles in terms of  1-dimensional field theory. This notion of vector bundle --valued symplectic forms also leads to the definition of poly--Poisson structures \cite{IMV,NMA-lmp}, analogous to the way in which Poisson structures generalize symplectic forms.
\newline
Motivated by the properties of poly--Poisson structures and their similarities with the usual Poisson bivectors, we studied the {\it poly--Poisson sigma model} in \cite{PPSM}. This model extends naturally the Poisson sigma model, a 2-dimensional field theory relevant to the study of deformation quantization of Poisson structures \cite{CattaneoFelder2}, as well as the integration problem of Lie algebroids \cite{CattaneoFelder}. The poly--Poisson sigma model was constructed by replacing the  Poisson target by a poly-Poisson structure $(S,P)$, and using the fact that the theory is topological to describe a geometric structure of the phase space that integrates a given poly-Poisson structure $(S,P)$ . In particular, it is proven \cite{PPSM} that the reduced phase space can be equipped with a poly-symplectic groupoid structure, extending the results by Cattaneo and Felder \cite{CattaneoFelder}, as well as Cattaneo and the first author \cite{CattaneoContreras}, to the setting of poly-symplectic and poly-Poisson structures. 
\newline
The extensions of such results are global and canonical in nature, independent on the choice of coordinates. This particular fact leads to a generalization of this construction by extending the usual Poisson and symplectic structures to the setting of graded differential geometry. Supermanifolds were originally introduced in the physics literature to provide a
geometric framework  to the emerging theory of supersymmetry. A subsequent integral degree was considered to include Poisson structures, Lie and Courant algebroids.
\newline
In this note we consider a particular case of this construction: poly-symplectic forms in graded manifolds and their transgression, following the AKSZ construction as in \cite{AKSZ, CattaneoFelder3}.
We promote the poly-momentum formalism to topological field theories of AKSZ type. In particular, we prove a poly-symplectic analogue of the Schwarz theorem (Theorem~\ref{thm:poly-Schwarz}) which provides a Darboux-type result for graded poly-symplectic manifolds. The other key result is the existence of transgression (Theorem~\ref{thm:transgerssion2}) for mapping spaces with poly-symplectic target. Last but no least, we describe the classical master equation in the graded poly-symplectic formalism (Theorem~\ref{thm:polyCME})  revisiting the vector-valued version of the Poisson sigma model in this framework. 
\newline
This construction serves as a first step toward the perturbative quantization of the poly-Poisson sigma model, where the goal is to describe and to solve the deformation quantization problem for poly-Poisson structures by using the corresponding sigma model as an AKSZ theory. 
It also provides a bridge between topological field theories of AKSZ type and the covariant formulation of field theory via poly-symplectic geometry \cite{Caratheodory, Gunther}. A vector field-formulation of AKSZ theories provides interesting examples supported by vector-valued variational principles \cite{Finet}, as well as a covariant-differential formulation of Lagrangian field theory \cite{Canarutto}.
\subsection*{Acknowledgements} I.C. would like to thank Alberto Cattaneo and Michele Schiavina for insightful discussions.  We also thank the anonymous referees for their valuable comments and suggestions, that greatly improved the content of the paper.
\section{Overview of poly-symplectic and poly-Poisson geometry}\label{sec:PP}
\begin{defn}\label{Def:PolyP}
A \textbf{Poly-Poisson  structure} of order $r$, or simply an $r$-Poisson structure, on a manifold $M$ is a pair $(S,P)$, where $S\to M$ is a vector sub-bundle of $\TM\otimes \mathbb R^r$ and $P:S\to TM$ is a vector-bundle morphism (covering the  identity) such that the following conditions hold:

\begin{itemize}
\item[(i)] $i_{P(\eta)} \eta=0$, for all $\eta\in S$,
\item[(ii)] $S^\circ= \{X\in TM| i_X\eta=0, \,\forall\,\eta\in S\}=\{0\}$,
\item[(iii)] the space of section $\Gamma(S)$ is closed under the bracket
\begin{equation}\label{Def:bracketHP} 
\lfloor \eta,\gamma \rfloor:=\Lie_{P(\eta)}\gamma-i_{P(\gamma)}d\eta \; \mbox{ for } \;\gamma,\eta\in \Gamma(S),
\end{equation}
and the restriction of this bracket to $\Gamma(S)$ satisfies the Jacobi identity.
\end{itemize}
We will call the triple $(M,S,P)$ an $r$-\textbf{Poisson manifold}. 
\end{defn}
\begin{ex}[\textbf{Poly-symplectic structures}]\label{ex:PS}
First we will consider the case of $P$ being an isomorphism of vector bundles. By using the natural projections $p_j:S\to \TM$ we can define the following bundle map $$\omega_j^{\sharp}:TM\overset{P^{-1}}{\longrightarrow}S\to T^*M.$$ Condition (i) in Def.~\ref{Def:PolyP} is the same as that each $\w_j$ is skew-symmetric, whereas condition (ii) means that $\cap \mbox{ker} \omega_j=0$. Finally, condition (iii) is equivalent to $d\omega_j=0$ for $j=1,\ldots,r$. 
\end{ex}
Similar to the case of Poisson structures, we can define the space of admissible sections 
$$C_{\mathrm{adm}}^\infty(M,\mathbb R^r)=\{\alpha\in C^\infty(M,\mathbb R^r) \vert d\alpha\in \Gamma(S)\}$$
on a poly--Poisson manifold $(M,S,P)$ and define the bracket 
\begin{equation}\label{eq:Poisson-bracket}
\{\cdot,\cdot\}:C_{\mathrm{adm}}^\infty(M,\mathbb R^r)\times C_{\mathrm{adm}}^\infty(M,\mathbb R^r)\to 
C^\infty(M,\mathbb R^r)
\end{equation}
by the relation $\{\alpha,\beta\}:=i_{P(d\alpha)}d\beta$.\\

For any two admissible sections $\alpha,\beta\in C_{\mathrm{adm}}^\infty(M,\mathbb R^r)$, the definition of the bracket \eqref{Def:bracketHP} shows that $d\{\alpha,\beta\}=\lfloor d\alpha,d\beta \rfloor$, hence 
$$\{C_{\mathrm{adm}}^\infty(M,\mathbb R^r),C_{\mathrm{adm}}^\infty(M,\mathbb R^r)\}\subset C_{\mathrm{adm}}^\infty(M,\mathbb R^r)$$ and $[P(d\alpha),P(d\beta)]=P(d\{\alpha,\beta\})$. Finally, the skew-symmetry of $P$ ( item (i) in Definition \ref{Def:PolyP}), leads to the identity $i_{P(d\{\alpha,\beta\})}d\gamma+i_{P(d\gamma)}d\{\alpha,\beta\}=0$ for any three admissible sections, and this implies that 
$$\{\alpha,\{\beta,\gamma\}\}+\{\beta,\{\gamma,\alpha\}\}+\{\gamma,\{\alpha,\beta\}\}=0.$$

\subsection{The poly-Poisson Sigma Model (PPSM)}
The Poisson Sigma Model has been studied extensively in the literature (e.g. \cite{CattaneoContreras,CattaneoFelder, CattaneoFelder2, CattaneoFelder3, IvanThesis}), from the point of view of deformation quantization, as well as the integration of Poisson structures. 
Following \cite {PPSM}, a 2-dimensional Topological Field Theory (TFT) can be constructed using the following data:
\begin{enumerate}
    \item \emph{Source space:} A two dimensional disk $\Sigma=D^2$, with the boundary $\partial \Sigma$ split into closed
intervals $I$ intersecting at the end points. On alternating intervals we impose boundary conditions in such a way that the remaining intervals are free (following \cite{ IvanThesis, PPSM})\footnote{For the usual Poisson Sigma Model, this choice of boundary conditions and sectors allows to construct the symplectic groupoid integrating a given (integrable) Poisson manifold, via the reduced phase space of the theory.}.
    \item \emph{Target space:} A poly-Poisson structure $(S,P)$.
\end{enumerate}
In this TFT the space of bulk fields for PPSM is given by
$\F^{PP}=\mbox{Map}(T\Sigma, T^*M\otimes \mathbb R^r) 
$ whereas the space of boundary fields is 
$
\F^{PP}_{\partial}=\mbox{Map}(T\partial \Sigma, T^*M \otimes \mathbb R^r)
$.
Using the same boundary conditions as in the PSM, we can identify $\F^{PP}_{\partial}$ with the $r$-th Whithey sum of the cotangent bundles of the path-space of $M$. Therefore
\[\F^{PP}_{\partial}:=\oplus_r T^*(\pth(M))\cong \pth(\oplus_r T^*M),\] from which we can deduce the following result:

\begin{thm}\label{thm:PPSM}
$\F^{PP}_{\partial}$ is a weak poly-symplectic Banach manifold. Moreover, the submanifold $\F(S,P)=\pth(S)\overset{\iota}{\to}\F^{PP}_{\partial}$ is a weak $r$-symplectic submanifold with the restriction of the graded symplectic form in $\F^{PP}_{\partial}$.
\end{thm}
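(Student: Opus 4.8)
The plan is to prove both assertions through the canonical poly-cotangent structure, reducing the infinite-dimensional claims to a pointwise computation on the target that is then transgressed over the interval. First I would fix a Banach regularity class for paths, say $H^1$ maps $I\to M$, so that $\pth(M)$ is a smooth Banach manifold with $T_X\pth(M)=\Gamma(X^*TM)$; then $T^*(\pth(M))$ and its $r$-fold Whitney sum $\oplus_r T^*(\pth(M))$ are Banach manifolds, and the identifications $\F^{PP}_\partial\cong\oplus_r T^*(\pth(M))\cong\pth(\oplus_r T^*M)$ recalled before the statement become isomorphisms of Banach manifolds. On $\oplus_r T^*(\pth(M))$ I would introduce the $\R^r$-valued tautological (poly-Liouville) $1$-form $\theta=(\theta_1,\dots,\theta_r)$, where $\theta_j$ is the pullback of the canonical Liouville form under the $j$-th projection, and set $\omega=d\theta$, i.e. $\omega_j=d\theta_j$. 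Equivalently, $\omega_j(V,W)=\int_I(\Phi^*\Omega_j)(V,W)$ is the transgression over $I$ of the $j$-th canonical form $\Omega_j$ on $\oplus_r T^*M$; this is the graded symplectic form referred to in the statement.

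Closedness is then immediate, since each $\omega_j$ is exact (equivalently, transgression commutes with $d$ and each $\Omega_j$ is closed). For the weak poly-symplectic condition I must verify $\bigcap_j\ker\omega_j=0$. In local cotangent coordinates $(q,p^{(1)},\dots,p^{(r)})$ on the target one has $\Omega_j=dp^{(j)}\wedge dq$, so a tangent vector annihilated by every $\Omega_j$ must have vanishing base component (by pairing against the $p^{(j)}$-directions) and vanishing $p^{(j)}$-component for each $j$ (by pairing against the $q$-directions), hence it is zero. Transgressing, if $V\in\bigcap_j\ker\omega_j$ then $\int_I(\Phi^*\Omega_j)(V,W)=0$ for all $W$ and all $j$; testing against compactly supported variations $W$ and invoking the pointwise result forces $V\equiv 0$. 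This yields \emph{weak} non-degeneracy, the qualifier reflecting that in the non-reflexive Banach setting the induced maps $T\to T^*$ are injective but need not be surjective.

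For the submanifold assertion, note that since $S\subset\oplus_r T^*M$ is a split vector subbundle, $\pth(S)$ is a split Banach submanifold of $\pth(\oplus_r T^*M)=\F^{PP}_\partial$ with $T_\gamma\pth(S)=\Gamma(\gamma^*TS)$, and $\iota^*\omega=(\iota^*\omega_1,\dots,\iota^*\omega_r)$ is automatically closed. The real content is that the restriction remains non-degenerate, and here condition (ii), $S^\circ=0$, enters decisively. Pointwise I would split $T_\eta S$ into vertical directions (valued in $S_x$) and horizontal lifts of $T_xM$. Pairing a would-be kernel vector against all vertical directions forces its base component to lie in $S^\circ_x=0$, so the vector is vertical; pairing the resulting vertical vector against the horizontal lifts, which surject onto $T_xM$, then forces each of its $r$ covector components to annihilate all of $T_xM$ and hence to vanish. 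Thus $\Omega_j|_S$ is pointwise (weakly) non-degenerate, and the same test-section and transgression argument as above upgrades this to $\bigcap_j\ker(\iota^*\omega_j)=0$ on $\pth(S)$.

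The main obstacle is analytic rather than algebraic: making the weak non-degeneracy precise in the Banach category. One must choose the path-space topology and the model for $T^*(\pth(M))$ so that the transgressed pairing is continuous, confirm that $\pth(S)$ is genuinely a split submanifold so that $\iota^*$ is well behaved, and justify the passage from pointwise to integrated non-degeneracy, which rests on the density of localized test sections together with care about the failure of reflexivity. In addition, the boundary conditions producing the identification with $\oplus_r T^*(\pth(M))$ must be checked to be compatible with $S$, i.e. to cut out exactly $\pth(S)$; once this is in place, conditions (i)--(ii) of Definition~\ref{Def:PolyP} supply everything needed algebraically, in the spirit of the poly-symplectic reduction of Example~\ref{ex:PS}.
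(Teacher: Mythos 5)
Your proposal is correct and follows essentially the same route as the paper: both arguments hinge on condition (ii) of Definition~\ref{Def:PolyP}, first pairing a kernel vector against fiber-direction variations (your ``vertical directions,'' the paper's curves with constant base path that are linear on the fibers) to force the base component into $S^\circ=\{0\}$, and then pairing against variations whose base components sweep out $TM$ to kill the remaining covector components. The only difference is packaging --- you phrase it as pointwise non-degeneracy on the target upgraded by a fundamental-lemma/localization argument, while the paper runs the identical two-step computation directly on path space --- which is not a substantive divergence.
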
 
The proof can be found in \cite[Section~IV.B]{PPSM}, however we give a short comment on the proof of the second statement. First consider any element $\gamma\in \pth(S)$ and  denote by $\gamma_b$ its base path and $(\gamma^1,\dots,\gamma^r)\in S_{\gamma_b}$. If $\bar{\gamma},\bar{\eta}$ are two curves in $\pth(S)$ starting at $\gamma$ and $\delta \bar{\gamma},\delta \bar{\eta}$ represent tangent vectors of paths, then 
$$\w(\delta \bar{\gamma},\delta \bar{\eta})=\int_0^1(\dots,\delta_1\bar{\gamma}_b\delta_2\bar{\eta}^i-\delta_1\bar{\eta}_b\delta_2
\bar{\gamma}^i,\dots)_{i=1,\dots,r}dt.$$
We now assume that $\delta \bar{\gamma}\in \Ker \omega$, $\bar{\eta}$ has constant base path and is linear on the fibers. We obtain that
$$\int_0^1(\delta_1\bar{\gamma}_b\bar{\eta}^1,\dots,\delta_1\bar{\gamma}_b\bar{\eta}^r)dt=0$$
for any $(\bar{\eta}^1,\dots,\bar{\eta}^r)\in S_{\bar{\eta}_b}$, hence $\delta_1\bar{\gamma}_b\in S_{\bar{\eta}_b}^0=\{0\}$. This leads us to note that, for any curve $\bar{\eta}$ in $\pth(S)$, the following holds:
$$\w(\delta \bar{\gamma},\delta\bar{\eta})=\int_0^1(-\delta_1\bar{\eta}_b\delta_2\bar{\gamma}^1,\dots,-\delta_1\bar{\eta}_b\delta_2\bar{\gamma}^r)dt=0.$$
As $\delta_1\bar{\eta}$ runs over $TM$, we can conclude that $\delta_2\bar{\gamma}^i=0$, which completes the assertion that $\delta\bar{\gamma}=0$, when $\delta\bar{\gamma}\in \Ker\w$.

Finally, the integration procedure via the poly--Poisson sigma model needs an action that is described by
\begin{equation}\label{PPSM_Action}
\int_{\Sigma} \langle \eta, dX \rangle + \frac 1 2 \langle P(X)\eta, \eta \rangle,
\end{equation}
where $(X,\eta)$ are coordinates of  $\F^{PP}$ and the pairing takes values in the vector space $\mathbb{R}^r$.

The reduced space yielding the integration comes from the local gauge symmetries for PPSM, $\mathfrak g_0$ as the (infinite-dimensional) Lie algebra of $\mathcal C^1$-maps $\alpha: [0,1] \to \Gamma(S)$ such that
\begin{enumerate}
\item $\alpha(0)=\alpha(1)=0, \forall \alpha \in \Gamma(S)$.
\item The Lie bracket is given by $ [ \alpha, \beta](t)= \lfloor \alpha(t),\beta(t) \rfloor$, i.e. the bracket is defined pointwise by using the bracket $\lfloor \cdot, \cdot \rfloor$ from the sections on $S$.
\end{enumerate}
By using the $r$-Poisson structure of $(S,P)$ in the definition of $\mathfrak g_0$ we can define two operations:
\begin{align}\label{def:g-PPaction} 
\xi:\mathfrak g_0\to \mathfrak{X}(\F^{PP}_{\partial}); \hspace{1cm} 
&\xi_{\beta}(X)=-P_X(\beta)\\ \notag
&\xi_{\beta}(\eta) = d\beta + \partial P_x\eta \beta\\ \label{def:PPm.m.}
H:\mathfrak g_0\times \F^{PP}_{\partial}\to \mathbb R^r ; \hspace{1cm}
&(\beta,(X,\eta))\mapsto \int_0^1\beta(dX) dt-\int_0^1\eta(P_X(\beta))dt
\end{align}
It follows  \cite{PPSM} that $\mathfrak g_0$ generates the infinitesimal symmetries of PPSM, as described in the following
\begin{thm}
Let $\beta$ be an element of $\mathfrak g_0$.
\begin{enumerate}
\item The action by $\beta$ is a lift of an action by $\beta$ on $\mbox{Map} ([0,1], M)$.
\item The action is Hamiltonian with respect to the weak $r$-symplectic and Hamiltonian function $H$, i.e. $i_{\zeta(\beta)}\w=d\langle H,\beta\rangle$ with the $\mathbb{R}^r$-valued pairing.
\end{enumerate}
\end{thm}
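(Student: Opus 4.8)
The plan is to prove the two statements in turn. For \textit{(1)}, I would use the natural bundle projection $\pi\colon \F^{PP}_\partial \to \pth(M)=\Map([0,1],M)$ that forgets the fibre variable $\eta$ and remembers only the base path $X$. Since the base component of the action, $\xi_\beta(X)=-P_X(\beta)$, depends on the point $(X,\eta)$ only through $X$, the vector field $\xi_\beta$ is $\pi$-related to the vector field $X\mapsto -P_X(\beta)$ on $\pth(M)$; in other words $\pi_*\xi_\beta$ is well defined and $\xi_\beta$ is a lift of it, which is exactly the assertion. To confirm that $\beta\mapsto -P_{(\cdot)}(\beta)$ is itself a $\mathfrak g_0$-action, I would invoke the compatibility of the anchor $P$ with the bracket $\lfloor\cdot,\cdot\rfloor$, i.e. the poly-Poisson identity $[P(\eta),P(\gamma)]=P(\lfloor\eta,\gamma\rfloor)$ that encodes condition (iii) of Definition~\ref{Def:PolyP}, applied pointwise in $t$; this is the same identity already recorded before the theorem in the form $[P(d\alpha),P(d\beta)]=P(d\{\alpha,\beta\})$.

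For \textit{(2)}, the strategy is a direct first-variation computation comparing the two $1$-forms $d\langle H,\beta\rangle$ and $i_{\xi_\beta}\omega$ on $\F^{PP}_\partial$ (here $\xi_\beta$ is the fundamental vector field written $\zeta(\beta)$ in the statement). Working at the level of $1$-forms is unavoidable: $\omega$ is only \emph{weakly} nondegenerate, so it cannot be inverted and the moment-map equation must be verified as an identity rather than solved for $\xi_\beta$. Writing $H_\beta:=\langle H,\beta\rangle=\int_0^1\beta(dX)\,dt-\int_0^1\eta(P_X(\beta))\,dt$ and a generic tangent vector at $(X,\eta)$ as $(\delta X,\delta\eta)$, I would vary each integral separately. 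In the first integral the variation splits into a contribution from evaluating $\beta$ at the moving point $X(t)$ and one from $\delta\dot X$; integrating the latter by parts in $t$ produces a boundary term $[\,\beta(\delta X)\,]_0^1$ that vanishes precisely because $\beta(0)=\beta(1)=0$ for $\beta\in\mathfrak g_0$. What survives reassembles into the $d\beta$-part of $\xi_\beta(\eta)$.

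In the second integral, varying $\eta$ gives $-\delta\eta(P_X(\beta))$, which matches the term of $i_{\xi_\beta}\omega$ produced by $\xi_\beta(X)=-P_X(\beta)$; varying $X$ produces the two pieces $\eta\big((\partial_{\delta X}P)\beta\big)$ and $\eta\big(P(\nabla_{\delta X}\beta)\big)$, which together reconstruct the $\partial P_x\eta\,\beta$ part of $\xi_\beta(\eta)$ once it is paired against $\delta X$. Recalling from Theorem~\ref{thm:PPSM} that $\omega$ is the canonical $r$-symplectic form pairing the base component of one tangent vector with the fibre components of the other, so that $\omega(\xi_\beta,(\delta X,\delta\eta))=\int_0^1\big(\xi_\beta(X)\cdot\delta\eta^i-\delta X\cdot\xi_\beta(\eta)^i\big)_{i=1,\dots,r}\,dt$, I would collect the four contributions and read off $d\langle H,\beta\rangle=i_{\xi_\beta}\omega$, which is the claim.

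The step I expect to be the main obstacle is the bookkeeping in the second integral: cleanly separating the $X$-dependence of $P_X(\beta)$ into the intrinsic variation $\partial_{\delta X}P$ of the anchor and the variation $\nabla_{\delta X}\beta$ of the section, and then using the skew-symmetry condition (i), $i_{P(\eta)}\eta=0$, to reorganise the resulting cross terms so that they fall exactly into the prescribed shape of $\xi_\beta(\eta)$. One must also keep track of the $\mathbb R^r$-valued nature of both $\omega$ and $H$: the identity is checked componentwise in $i=1,\dots,r$, and, as noted above, no inversion of the merely weak poly-symplectic form is ever performed.
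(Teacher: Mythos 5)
This paper contains no proof of this theorem to compare against: the statement is imported from the reference \cite{PPSM} (``It follows \cite{PPSM} that $\mathfrak g_0$ generates the infinitesimal symmetries\dots''), and the proof lives there. So your proposal can only be judged against the standard argument, namely the Cattaneo--Felder computation for the Poisson sigma model adapted to the $\mathbb R^r$-valued setting --- and that is precisely what you propose. Part (1) via projectability (the base component $-P_X(\beta)$ of $\xi_\beta$ is independent of $\eta$, hence $\pi$-related to a vector field on $\Map([0,1],M)$) together with the anchor identity $[P(\eta),P(\gamma)]=P(\lfloor\eta,\gamma\rfloor)$ is correct; note that the paper records this identity only for exact sections $d\alpha$, but it holds for all sections because $(S,P,\lfloor\cdot,\cdot\rfloor)$ is a Lie algebroid, which is exactly the content of Lemma~\ref{QP_Target}. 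Part (2) via first variation, integration by parts in $t$, and the endpoint conditions $\beta(0)=\beta(1)=0$ is also the right mechanism, and you are right that the moment map equation must be verified as an identity of $1$-forms rather than obtained by inverting the merely weak form.

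Two caveats on the bookkeeping you yourself flag as the main obstacle. First, when you split $\delta\bigl(\eta(P_X(\beta))\bigr)$ into $\eta\bigl((\partial_{\delta X}P)\beta\bigr)+\eta\bigl(P(\nabla_{\delta X}\beta)\bigr)$, be aware that $\nabla_{\delta X}\beta$ need not take values in the sub-bundle $S$, so $P(\nabla_{\delta X}\beta)$ is not invariantly defined; the clean route is to differentiate the global vector field $P(\beta(t))$ on $M$ as a whole, giving $\eta\bigl(\partial_{\delta X}[P(\beta)]\bigr)$, or to fix a connection on $S$ and accept that only the sum of your two pieces is meaningful. Second, since $H$ is already written as $\int_0^1\beta(dX)\,dt-\int_0^1\eta(P_X(\beta))\,dt$, the $\delta\eta$-terms match $\xi_\beta(X)=-P_X(\beta)$ immediately, and condition (i) of Definition~\ref{Def:PolyP} is not actually needed at that step (it is what puts $H$ in this form in the first place); what remains is to recognize the surviving $\delta X$-terms as the expression $d\beta+\partial P_x\eta\,\beta$ of \eqref{def:g-PPaction}, whose notation is schematic, so the precise shape of the cross terms is fixed by your computation rather than checked against it. Neither caveat is a gap in the approach; both are resolved in the way you anticipate.
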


\begin{rem}
The reduction of the space of $r$--covelocities (cf. Example 2.17~\cite{PPSM}) can be extended to the infinite dimensional case yielding a reducible weak $r$--symplectic action, hence a weak $r$--symplectic reduction in degree 0 can be done. Furthermore, using such reduction we also obtain that the coisotropic reduction inherits an $r$--symplectic groupoid structure integrating the Lie algebroid $(S, P)$. For more details on these procedures we refer the reader to Section 4 of \cite{PPSM}.
\end{rem}


\section{Graded manifolds and the AKSZ formalism}
The following definition of graded manifold, suitable for the AKSZ formalism, is based on \cite{CattaneoSchaetz}. The grading is assumed to be nonnegative. This definition includes infinite dimensional examples, e.g. mapping spaces. For more details on a rigorous formulation, see \cite{CattaneoFelder3, CattaneoSchaetz}.
\begin{defn} A graded manifold $\mathcal M$ is a locally ringed space $(M, \mathcal O_M)$ that is locally isomorphic to the $\mathbb Z$-graded algebra \footnote{On a $\mathbb Z$-graded manifold there is a  natural $\mathbb Z_2$-grading (which defines even and odd coordinates) given by the parity of the $\mathbb Z$-grading.}
\[(U, \mathcal C^{\infty} (U)\otimes \wedge W^*),\]
where $U$ is an open subset of $\mathbb R^n$ and $W$ is a finite dimensional vector space.
\end{defn}
Any graded manifold has a {\it global} vector field $E$, called the \emph{Euler vector field}, that allows us to differentiate in the direction of the base structure. In coordinates we can write $E=p_i\partial p_i$.
\begin{ex}\label{ex:OddTangent}(Odd tangent bundle). If $M$ is an ordinary manifold, the graded manifold $T[1]M$ has as sheaf of functions the de Rham complex $\Omega(M)$.
\end{ex}
\begin{ex}\label{ex:oddCT}(Odd cotangent bundle). If $M$ is an ordinary manifold, the graded manifold $T^*[1]M$ has as sheaf of functions the Gerstenhaber algebra of poly-vector fields.
\end{ex} 
\begin{rmk}
A similar construction described in the previous examples works the same when $\mathcal M$ is a graded manifold.
\end{rmk}

An important set of objects in this formalism is the set of 1-graded symplectic forms $\w$ on graded manifold $M$, which correspond to homogeneous 2-forms, i.e $\Lie_E\w=\w$, of degree 1, closed with respect to the de Rham differential, and such that $\w^\sharp:T\mathcal M\to T^*[1]\mathcal M$ is an isomorphism of (graded) vector bundles. 

The graded manifold in the example \ref{ex:oddCT}, $T^*[1]\mathcal M$ (where $\mathcal M$ is either graded or ungraded) always comes equipped with a symplectic structure that resembles the canonical symplectic form of the ordinary cotangent bundle. Furthermore, Schwarz' Theorem states that any degree 1 symplectic graded manifolds is always of the form $T^*[1]M$ with an ordinary manifold $M$.
\begin{thm}\label{Schwarz-Roytenberg}\cite{Roy}
There is a 1-to-1 correspondence between nonnegatively -graded symplectic manifolds of degree 1 and shifted cotangent bundles of smooth manifolds.
\end{thm}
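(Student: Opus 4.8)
The plan is to establish the bijection on isomorphism classes by showing that every nonnegatively-graded symplectic manifold $\calM$ of degree $1$ is symplectomorphic to a shifted cotangent bundle $T^*[1]M$ with its canonical form, and that this is inverse to the assignment $M\mapsto T^*[1]M$. First I would pin down the underlying graded manifold. Since the grading is nonnegative and $\calM$ carries coordinates only in degrees $0$ and $1$, the sheaf of functions is generated over its degree-$0$ part $\calC^\infty(M)$ (whose body $M$ is an ordinary manifold) by a locally free module of degree-$1$ generators. That module is $\Gamma(E^*)$ for a vector bundle $E\to M$, so $\calM\cong E[1]$ as a graded manifold; the task is then to identify $E$ and normalize $\w$.

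Next I would exploit homogeneity and closedness to get a Liouville form. Because $\w$ is homogeneous of degree $1$ we have $\Lie_E\w=\w$, and together with $d\w=0$ and Cartan's formula $\Lie_E=d\iota_E+\iota_E d$ this yields
\[
\w=d\alpha,\qquad \alpha:=\iota_E\w,
\]
a tautological $1$-form of degree $1$. Writing $\w$ in local coordinates $x^i$ (degree $0$) and $\xi^a$ (degree $1$), nonnegativity of the grading forbids any $d\xi^a\wedge d\xi^b$ term, since its coefficient would have degree $-1$; hence $\w$ is the sum of a mixed part $g_{ia}(x)\,dx^i\wedge d\xi^a$ and a base part of degree $1$ (linear in $\xi$). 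Non-degeneracy of $\w^\sharp:T\calM\to T^*[1]\calM$ then forces the mixed pairing $g$ between the degree-$0$ and degree-$1$ tangent directions to be an isomorphism, i.e. $E^*\cong TM$, so $E\cong T^*M$ canonically. Invariantly, the induced degree $-1$ Poisson bracket restricts to a map $\Gamma(E^*)\times\calC^\infty(M)\to\calC^\infty(M)$ that is a derivation in the second slot, defining an anchor $E^*\to TM$ that non-degeneracy makes an isomorphism. This already gives $\calM\cong T^*[1]M$ as graded manifolds.

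Finally I would normalize the symplectic form. Under the identification $E\cong T^*M$ the goal is to bring $\alpha$ to the canonical Liouville form $p_i\,dx^i$ by a degree-preserving fibre transformation, so that $\w=d\alpha$ becomes the canonical degree-$1$ symplectic form on $T^*[1]M$ described in Example~\ref{ex:oddCT}. The converse direction is immediate: each $T^*[1]M$ carries its canonical degree-$1$ symplectic structure, and the two assignments are mutually inverse on isomorphism classes, giving the stated $1$-to-$1$ correspondence.

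The hard part will be this last normalization step, namely removing the degree-$1$ base correction globally. Locally it is a Darboux-type computation (a shift $p\mapsto p+\theta(x)$ absorbing the base term), but to globalize one must verify that the correction is a closed object whose class vanishes, so that the fibre translation is defined consistently across charts and respects the vector bundle structure of $T^*[1]M$; this compatibility, rather than the pointwise algebra, is where the genuine content lies.
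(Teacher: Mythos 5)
Your first four steps are essentially the argument the paper sketches (following Roytenberg, but phrased via differential forms rather than the graded Poisson bracket): identify $\mathcal{M}$ with a shifted vector bundle $E[1]$ over the body $M$; use $\Lie_E\omega=\omega$, $d\omega=0$ and Cartan's formula to get $\omega=d(\iota_E\omega)$; exclude $d\xi\wedge d\xi$ terms by nonnegativity of the grading; and read off from non-degeneracy of the mixed block that $E\cong T^*M$. The paper packages that last point as the statement that $\omega^\sharp$ exchanges odd and even directions and identifies $TM\cong A^*[1]$.

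The gap is in your final normalization step, and it is a real one because the repair you sketch would not work. A fibre shift $p\mapsto p+\theta(x)$ adds a degree-$0$ quantity to a degree-$1$ coordinate, so it is not a morphism of nonnegatively graded manifolds at all; if a residual base term genuinely had to be absorbed by such a translation, no degree-preserving transformation could do it and the proof would collapse. What saves the theorem --- and what your own step 2 already contains --- is that there is no residual base term. Since the Euler vector field is vertical, $\alpha=\iota_E\omega=g_{ia}(x)\,\xi^a\,dx^i$ is fibrewise \emph{linear}, with no affine part and no $d\xi$ part; the identity $\omega=d\alpha$ then forces the $dx\wedge dx$ component of $\omega$ to be exactly the $dx\wedge dx$ part of $d\bigl(g_{ia}\xi^a\,dx^i\bigr)$, so it is not independent data that needs to be killed. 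Consequently the bundle map $E\to T^*M$ defined fibrewise by $\xi\mapsto g(\xi,\cdot)$ --- equivalently, the new fibre coordinates $p_i=g_{ia}(x)\xi^a$, which \emph{are} degree-preserving, and which globally is nothing but $\omega^\sharp$ restricted to the vertical directions --- pulls the tautological form $p_i\,dx^i$ back to $\alpha$, hence pulls the canonical symplectic form back to $\omega$. The chart-compatibility and vanishing-of-a-cohomology-class issues you flag as ``where the genuine content lies'' are therefore vacuous: homogeneity makes the primitive $\iota_E\omega$ globally defined and canonical, and this rigidity is precisely why the graded Darboux statement holds globally, which is how the paper (and Roytenberg) conclude that the identification induced by $\omega^\sharp$ is itself the desired symplectomorphism.
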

Note that this result is a global version of the Darboux theorem for ordinary symplectic manifolds.  In addition to a graded symplectic structure, there is a differential defined on $\mathcal C^{\infty}(\mathcal M)$, called the \emph{cohomological vector field}. We say that a graded vector field $X$ on a graded manifold $\mathcal M$ is \emph{cohomological} if 
\begin{enumerate}
    \item $X$ has degree +1,
    \item $[X,X]=0$.
\end{enumerate}
\begin{ex}
Following Example \ref{ex:OddTangent}, the graded manifold $T[1]M$ has the de Rham differential $X=d_{DR}$ as a cohomological vector field.
\end{ex}
\begin{ex}
Following-up with Example \ref{ex:oddCT}, a cohomological vector field is given by $X=[\Pi, \bullet]_{SN}$, where $\Pi$ is a Poisson vector field on $M$. Condition (2) is equivalent to the Schouten-Nijenhuis equation $[\Pi,\Pi]_{SN}=0$.
\end{ex}
The previous example is a particular instance of a $QP$\emph{-manifold}, which is a graded manifold equipped with both a cohomological vector field $Q$, and a graded symplectic structure $P$.

\subsection{The AKSZ formalism}
In this section we are interested in  describing the geometry on the space of smooth maps between two graded manifolds. A key ingredient for this theory is the \emph{Berezinian integration} \cite{CattaneoSchaetz, Manin} , which allows us to extend the Lebesgue measure to the graded setting. In particular, given two graded manifolds  $\mathcal N$ and $\mathcal M$, and a measure $\mu$ on $\mathcal N$, we want to push-forward $\mu$ at the level of differential forms. More precisely, there is a chain map $\mu_*: \Omega(\mathcal N\times \mathcal M)\to \Omega(\mathcal M)$ defined by 
\[(\mu_* \omega(z)(\lambda_1, \lambda_2, \cdots, \lambda_k))=\int_{y\in N}\omega(y,z)(\lambda_1, \lambda_2, \cdots, \lambda_k)\mu(y),\]
where $z \in \mathcal M$ and $\lambda_1, \lambda_2,\cdots, \lambda_k \in T_z \mathcal M$.
This construction allows us to push forward along the source manifold $N$, taking care of the odd directions. The usual Lebesgue integration is enhanced with the integration rule along odd coordinates:
\begin{equation}\label{Odd Berezinian}
\int \xi d\xi=1.
\end{equation}
For instance, following up on Example \ref{ex:OddTangent}, if $\mathcal N=T[1]X$  and $f \in \mathrm{Fun}(\mathcal N)$, then the Berezinian integration of $f$ is 
\[\int_{\mathcal N} \, f= \int_X j(f),\]
where $j$ is the identification between functions on $\mathcal N$ and $\Omega(X)$.

The key role of Berezinian that we want to use in these notes, is the induced map on the space $\mathcal{F}=\mathrm{Map}(\mathcal N,\mathcal M)$. For the desired map we first consider the canonical projection and evaluation maps 
$$P:N\times \mathcal{F}(\mathcal N,\mathcal M)\to \mathcal{F}(\mathcal N,\mathcal M) \mbox{\ \ and\ \ } \mathrm{ev}:\mathcal N\times \mathcal{F}(\mathcal N,\mathcal M)\to \mathcal M.$$ 
From these two maps we can consider the composition of pull-back and push-forward that lead us to construct differential forms on mapping spaces. Formally, we can define the map between the space of graded differential forms as follows:
\begin{eqnarray*}
\mathbb T:\Omega(\mathcal M) &\to& \Omega(\mathcal{F}(\mathcal N,\mathcal M))\\
\alpha &\mapsto& P_{*}\mbox{ev}^*(\alpha)= \int_{\mathcal N} \, \mbox{ev}^*(\alpha).
\end{eqnarray*}
In other words, we can identify the previous map with the induced $\mu_*$ operation, i.e. $\mathbb T=\mu_*\mathrm{ev}^*$ with $\mu_*$ as before. 

\subsection{Transgression on the mapping space}
Let $Q_{\mathcal N}$ and $Q_{\mathcal M}$ be two cohomological vector fields on $\mathcal N$ and $\mathcal M$, respectively. Using the canonical identification $T_f\mbox{Map}(\mathcal N,\mathcal M)\simeq \Gamma(\mathcal N,f^*T\mathcal M)$ for any smooth function $f:\mathcal N\to \mathcal M$, we can lift the vector fields $Q_{\mathcal N}$ and $Q_{\mathcal M}$ to $\mbox{Map}(\mathcal N,\mathcal M)$ as follows:  
\begin{align*}
    \hat{Q}_{\mathcal N}(f):& \mathcal N\to f^*T\mathcal M\\
    &x\mapsto d_xfQ_{\mathcal N}(x)\\
\breve{Q}_{\mathcal M}(f):& \mathcal N\to f^*T\mathcal M\\
&x\mapsto Q_{\mathcal M}(f(x)).
\end{align*}

And, as a general fact, if both vector fields are self-commuting then any linear combination is self commuting (see \cite{CattaneoFelder3}, Sec~2.2.1).

The following theorem \cite{CattaneoFelder3} guarantees the existence of a symplectic structure on the mapping space, provided that the target space is symplectic.
\begin{thm}\label{AKSZ-Transgression}
If we choose a nondegenerate measure $\mu$ on $\mathcal N$, the transgression of a symplectic structure $\omega$ on $\mathcal M$ induces a symplectic structure $\Omega$ on $\mathrm{Map}(\mathcal N,\mathcal M)$ defined by 
\begin{equation}
\Omega= \mathbb T \omega.    
\end{equation} 
\end{thm}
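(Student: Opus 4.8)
The plan is to verify that $\Omega = \mathbb{T}\omega$ satisfies the two defining conditions of a (graded) symplectic form: it is closed and it is nondegenerate; along the way one also checks that it is homogeneous of the appropriate degree. The closedness will be essentially formal, coming from the fact that transgression is a chain map, while the nondegeneracy is where the hypotheses on $\omega$ and on $\mu$ both enter.

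First I would record that $\mathbb{T} = P_*\,\mathrm{ev}^*$ commutes with the de Rham differential. The pullback $\mathrm{ev}^*$ always commutes with $d$, and the Berezinian push-forward $P_* = \mu_*$ is a chain map by construction, as recalled above. Hence $d\Omega = d\,\mathbb{T}\omega = \mathbb{T}\,d\omega = 0$, because $\omega$ is closed. The same functoriality shows that $\mathbb{T}$ intertwines the Euler vector fields on $\mathcal M$ and on $\Map(\mathcal N,\mathcal M)$, up to the degree of the measure, so that $\Omega$ is homogeneous of the expected degree; this is the routine bookkeeping that makes $\Omega$ a legitimate graded $2$-form.

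For nondegeneracy I would make $\Omega$ explicit. Under the identification $T_f\Map(\mathcal N,\mathcal M)\simeq \Gamma(\mathcal N, f^*T\mathcal M)$, unwinding the definition of $\mathbb{T}$ gives, for $\xi_1,\xi_2\in\Gamma(\mathcal N, f^*T\mathcal M)$,
\[
\Omega_f(\xi_1,\xi_2) = \int_{\mathcal N} \omega_{f(x)}\big(\xi_1(x),\xi_2(x)\big)\,\mu(x),
\]
since $d\,\mathrm{ev}$ sends a variation of $f$ to its pointwise value in $T_{f(x)}\mathcal M$. Suppose $\xi_1$ lies in the kernel, so that this integral vanishes for every test section $\xi_2$. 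Using that $\mu$ is nondegenerate, I would localize $\xi_2$ near an arbitrary point of $\mathcal N$ and let its value range over $T_{f(x)}\mathcal M$, including the odd directions that the Berezinian genuinely pairs against, concluding that $\omega_{f(x)}(\xi_1(x),\cdot)$ vanishes as a functional for each $x$. The pointwise nondegeneracy of $\omega$ on $\mathcal M$ then forces $\xi_1(x)=0$ for all $x$, i.e. $\xi_1=0$, so $\Omega^\sharp$ is injective.

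The main obstacle is precisely this last step in the graded, infinite-dimensional setting. On a mapping space one can in general only expect a \emph{weak} symplectic structure, so $\Omega^\sharp$ will be injective but need not be surjective; asserting more would require control of the relevant functional-analytic completions. Equally delicate is making the localization of $\xi_2$ rigorous while keeping track of the Berezinian signs and the odd fiber coordinates, so that the nondegeneracy of $\mu$ really does transfer the pointwise nondegeneracy of $\omega$ to the integrated form. Once these points are handled, closedness together with weak nondegeneracy give that $\Omega = \mathbb{T}\omega$ is a symplectic structure on $\Map(\mathcal N,\mathcal M)$, as claimed.
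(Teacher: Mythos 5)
Your proposal is correct, but note that the paper itself offers no proof of this statement: Theorem~\ref{AKSZ-Transgression} is imported verbatim from \cite{CattaneoFelder3}, so the only in-paper material to compare against are the proofs of its poly-symplectic analogues (Proposition~\ref{thm:transgerssion1} and Theorem~\ref{thm:transgerssion2}). Your argument matches those in structure: closedness is formal because $\mathbb{T}=P_*\,\mathrm{ev}^*$ is a chain map, and the real content is the kernel computation, which both you and the paper reduce to the pointwise nondegeneracy of $\omega$ on the target. The one genuine difference is how that reduction is carried out: you localize the test section $\xi_2$ near a point of $\mathcal N$ (a bump-function style argument, which needs the Berezinian bookkeeping you flag), whereas the paper tests the kernel against the special family of variations given by constant base maps and fiberwise (odd-wise) linear maps, which reduces the kernel of $\Omega$ directly to the kernel of the target form without any localization. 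The paper's device sidesteps the analytic delicacy you identify but only works because of the specific product structure of the target $\oplus_r T^*[1]M$; your localization argument is more general and makes the role of the nondegenerate measure $\mu$ explicit. Your caveat that one only obtains a \emph{weak} symplectic structure (injectivity of $\Omega^\sharp$, not surjectivity) is exactly right and is acknowledged by the paper itself, which states its own transgression results in the ``weak poly-symplectic'' category.
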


\subsection{The classical master equation}
The following theorem \cite{AKSZ, CattaneoFelder3} states the existence, under certain conditions, of a function $\mathcal S$ that satisfies the so called \textbf{classical master equation} (CME). See Subsection \ref{Poly-CME} for a version of this theorem in the poly-symplectic formalism (Theorem \ref{PCME}).
\begin{thm}\label{thm:CME}
Let $Q_{\mathcal N}$ and $Q_{\mathcal M}$ be two cohomological vector fields on $\mathcal N$ and $\mathcal M$ respectively, let $\mu$ be a $Q_{\mathcal N}$-invariant measure on $\mathcal N$ and let $\omega$ be a symplectic structure on $\mathcal M$. If $\omega$ is exact, i.e. $\omega= d \theta$, then $\hat{Q}_{\mathcal N}$ is a Hamiltonian vector field for the function $\hat{S}=-\iota_{\hat{Q}_{\mathcal N}} \Theta$, where 
$\Theta= \mu_{*} \mbox{ev}^* \theta.$
Furthermore, if  $Q_M$ admits a Hamiltonian function $S_{\mathcal M}$, then the function \[\mathcal S:= \mu_{*} \mathrm{ev}^* S_{\mathcal M}+ \breve{S}\] satisfies the \textbf{classical master equation}
\begin{equation}\label{CME}
(\mathcal S,\mathcal S)=0, 
\end{equation}
with respect to the Gerstenhaber bracket $(\cdot,\cdot)$ induced by $\Omega$.
\end{thm}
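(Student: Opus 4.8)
The plan is to realize $\mathcal{S}$ as a Hamiltonian function generating the total cohomological vector field $Q := \hat{Q}_{\mathcal{N}} + \breve{Q}_{\mathcal{M}}$ on $\mathcal{F} = \mathrm{Map}(\mathcal{N},\mathcal{M})$, and then to deduce the classical master equation purely from the fact that $Q$ is self-commuting. First I would record that $Q$ is cohomological: each lift is self-commuting, and $[\hat{Q}_{\mathcal{N}},\breve{Q}_{\mathcal{M}}]=0$ because the source and target lifts act independently, so by the general fact quoted before Theorem~\ref{AKSZ-Transgression} the combination $Q$ satisfies $[Q,Q]=0$. The whole argument then rests on two transgression identities expressing that $\mathbb{T}=\mu_{*}\mathrm{ev}^{*}$ intertwines the operations on $\mathcal{M}$ with those on $\mathcal{F}$, together with the fact (from $\mu_{*}$ being a chain map) that $\Omega=\mathbb{T}\omega=\mathbb{T}d\theta=d\mathbb{T}\theta=d\Theta$.

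For the first statement the key lemma is $\mathcal{L}_{\hat{Q}_{\mathcal{N}}}\Theta=0$. I would prove it by comparing the two vector fields $(Q_{\mathcal{N}},0)$ and $(0,\hat{Q}_{\mathcal{N}})$ on $\mathcal{N}\times\mathcal{F}$: by the very definition of the lift, $d\,\mathrm{ev}$ sends both to the same section $x\mapsto d_{x}f\,Q_{\mathcal{N}}(x)$, so their difference is $\mathrm{ev}$-vertical and hence $\mathcal{L}_{(Q_{\mathcal{N}},0)}\mathrm{ev}^{*}\theta=\mathcal{L}_{(0,\hat{Q}_{\mathcal{N}})}\mathrm{ev}^{*}\theta$. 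Since $(0,\hat{Q}_{\mathcal{N}})$ carries no source component it commutes with the fibre integration, giving $\mathcal{L}_{\hat{Q}_{\mathcal{N}}}\Theta=\mu_{*}\mathcal{L}_{(Q_{\mathcal{N}},0)}\mathrm{ev}^{*}\theta$; and the right-hand side vanishes precisely because $\mu$ is $Q_{\mathcal{N}}$-invariant, integrating a Lie derivative along a measure-preserving source vector field against $\mu$ yielding zero (a graded Stokes/integration-by-parts statement). With this in hand, Cartan's magic formula gives $\iota_{\hat{Q}_{\mathcal{N}}}\Omega=\iota_{\hat{Q}_{\mathcal{N}}}d\Theta=\mathcal{L}_{\hat{Q}_{\mathcal{N}}}\Theta-d\,\iota_{\hat{Q}_{\mathcal{N}}}\Theta=d\hat{S}$, which is exactly the claim that $\hat{Q}_{\mathcal{N}}$ is Hamiltonian with Hamiltonian $\hat{S}=-\iota_{\hat{Q}_{\mathcal{N}}}\Theta$.

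Next I would show that the transgressed target Hamiltonian $\mu_{*}\mathrm{ev}^{*}S_{\mathcal{M}}$ generates $\breve{Q}_{\mathcal{M}}$. The relevant identity here is that $\breve{Q}_{\mathcal{M}}$ is $\mathrm{ev}$-related to $Q_{\mathcal{M}}$, so $\iota$ commutes with $\mathrm{ev}^{*}$ and, since $\breve{Q}_{\mathcal{M}}$ again has no source component, with $\mu_{*}$ as well; using $\iota_{Q_{\mathcal{M}}}\omega=dS_{\mathcal{M}}$ one obtains $\iota_{\breve{Q}_{\mathcal{M}}}\Omega=\mu_{*}\mathrm{ev}^{*}\iota_{Q_{\mathcal{M}}}\omega=\mu_{*}\mathrm{ev}^{*}dS_{\mathcal{M}}=d(\mu_{*}\mathrm{ev}^{*}S_{\mathcal{M}})$. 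Adding the two pieces, the function $\mathcal{S}=\mu_{*}\mathrm{ev}^{*}S_{\mathcal{M}}+\hat{S}$ of the statement is a Hamiltonian function for the total field $Q$, that is $\iota_{Q}\Omega=d\mathcal{S}$.

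Finally, for the master equation itself I would invoke that the assignment $F\mapsto X_{F}$ (with $\iota_{X_{F}}\Omega=dF$) is a morphism of graded Lie algebras because $\Omega$ is closed, so $X_{(\mathcal{S},\mathcal{S})}=[X_{\mathcal{S}},X_{\mathcal{S}}]=[Q,Q]=0$; consequently $d(\mathcal{S},\mathcal{S})=\iota_{[Q,Q]}\Omega=0$, and a homogeneous $d$-closed function of nonzero degree on the connected mapping space must vanish, giving $(\mathcal{S},\mathcal{S})=0$. Equivalently, one computes $(\mathcal{S},\mathcal{S})=\iota_{Q}\iota_{Q}\Omega=Q(\mathcal{S})$ and checks its differential vanishes using $\mathcal{L}_{Q}\Omega=d\,\iota_{Q}\Omega=d\,d\mathcal{S}=0$ together with $[Q,Q]=0$. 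I expect the main obstacle to be the rigorous justification of the two transgression identities in the graded/Berezinian setting, in particular the vanishing $\mu_{*}\mathcal{L}_{(Q_{\mathcal{N}},0)}\mathrm{ev}^{*}\theta=0$, where the hypothesis $\mathcal{L}_{Q_{\mathcal{N}}}\mu=0$ must be combined with the correct Koszul sign rules for graded interior products and with the precise meaning of $d\,\mathrm{ev}$ and of the fibre integration $\mu_{*}$; the symplectic manipulations afterward are routine.
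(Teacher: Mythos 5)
Your proposal is correct at the usual formal level of rigor for graded mapping spaces, but note first that the paper itself does not prove Theorem~\ref{thm:CME}: it is quoted from \cite{AKSZ, CattaneoFelder3}, and the paper's own argument appears only for the poly-symplectic analogue, Theorem~\ref{PCME}, so that is the natural comparison. For the first claim you and the paper use the same key lemma, $\mathcal{L}_{\hat{Q}_{\mathcal N}}\Theta=0$ from $Q_{\mathcal N}$-invariance of $\mu$ plus Cartan's formula; the difference is that you sketch a self-contained proof (comparing the vector fields $(Q_{\mathcal N},0)$ and $(0,\hat{Q}_{\mathcal N})$ on $\mathcal N\times\mathcal F$, which have the same pointwise pushforward under $\mathrm{ev}$), whereas the paper simply cites \cite[Lemma~2.6]{CattaneoFelder3}. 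For the master equation your route is genuinely different: you show that $\mathcal S$ is a Hamiltonian function for the \emph{total} cohomological field $Q=\hat{Q}_{\mathcal N}+\breve{Q}_{\mathcal M}$ (using $\mathrm{ev}$-relatedness of $\breve{Q}_{\mathcal M}$ to $Q_{\mathcal M}$ and the chain-map property of $\mu_*$), deduce $d(\mathcal S,\mathcal S)=\iota_{[Q,Q]}\Omega=0$, and kill the resulting constant by a degree/connectedness argument. The paper instead splits $(\mathcal S,\mathcal S)$ into the three brackets $(\hat S,\hat S)$, $(\breve S,\breve S)$, $(\hat S,\breve S)$ and kills each separately: the first via $\iota_{[\hat Q_{\mathcal N},\hat Q_{\mathcal N}]}\Theta=0$, the second via the bracket-morphism property of transgression applied to $(S_{\mathcal M},S_{\mathcal M})=0$, and the cross-term via $[\hat Q,\breve Q]=0$ (constancy) together with invariance of $\mu$. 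Both routes rest on the same transgression identities; yours is closer to the original AKSZ philosophy and requires a single computation, but it does hinge on $(\mathcal S,\mathcal S)$ having nonzero degree (true in the intended setting, e.g.\ for the Poisson sigma model $\Omega$ has degree $-1$, $\mathcal S$ degree $0$, so the bracket has degree $+1$), a hypothesis not literally visible in the statement; the paper's split disposes of the cross-term by measure invariance alone, though it too implicitly uses degree reasoning for $(S_{\mathcal M},S_{\mathcal M})=0$. Either argument is acceptable, and your caveat about Koszul signs and the precise meaning of $\mu_*$ correctly identifies where the remaining technical work lies.
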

\section{The AKSZ formalism in the poly-symplectic case}
\subsection{Poly-symplectic structures and $QP$-manifolds}
Similar to the usual symplectic case, we can define a {\bf $k$-graded poly-symplectic} manifold as a nonnegatively-graded manifold
endowed with a closed 2-form $\w$ such that $\w^\sharp:T\mathcal{M}\to \oplus_rT^*[k]\mathcal{M}$ is non-degenerated and homogeneous  with respect to the Euler vector field, i.e $\Lie_E\w=\w$. One of the main examples is the Whitney sum of graded poly-symplectic manifolds. For any usual manifold $M$ we know that $T^*[1]M$ is 1-graded symplectic $N$-manifold. Moreover, it is an easy exercise to verify that $\oplus_rT^*[1]M$ is also an $N$-manifold and that the following facts hold:
\begin{prop}
The canonical projections $\pi_j:\oplus_rT^*[1]M\to T^*[1]M$ are surjective submersions and the common kernel of $d\pi_j$ is trivial.
\end{prop}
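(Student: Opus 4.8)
The plan is to reduce all three assertions to a coordinate computation in a local model for the Whitney sum, since each is essentially a linear-algebraic fact about direct sums of shifted cotangent fibers. First I would fix a chart: over an open set $U \subseteq M$ with coordinates $x^1, \dots, x^n$, the shifted cotangent bundle $T^*[1]M$ carries the degree-$0$ functions $x^i$ together with degree-$1$ fiber coordinates $p_i$. The $r$-fold sum $\oplus_r T^*[1]M$ is then covered by the same base chart, now equipped with one family of degree-$1$ fiber coordinates $p_i^{(j)}$ for each summand $j = 1, \dots, r$, and in these coordinates the projection acts by $\pi_j \colon (x^i, p_i^{(1)}, \dots, p_i^{(r)}) \mapsto (x^i, p_i^{(j)})$, forgetting every fiber direction except the $j$-th.

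Surjectivity is then immediate, since any point of $T^*[1]M$ admits a preimage by assigning the remaining fiber coordinates arbitrarily (for instance zero). For the submersion property I would compute $d\pi_j$ on the homogeneous coordinate frame: it sends $\partial_{x^i} \mapsto \partial_{x^i}$ and $\partial_{p_i^{(j)}} \mapsto \partial_{p_i^{(j)}}$, while $\partial_{p_i^{(k)}} \mapsto 0$ for $k \neq j$. Thus $d\pi_j$ surjects onto $T(T^*[1]M)$ in every graded degree, and because $\pi_j$ literally takes the form of a fiber projection in the adapted chart, the sheaf-theoretic version of the submersion condition follows with no extra work.

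For the final claim, the same computation identifies $\Ker d\pi_j = \mathrm{span}\{\partial_{p_i^{(k)}} : k \neq j,\ 1 \le i \le n\}$, the directions tangent to the summands other than the $j$-th. A tangent vector $v = \sum_i a_i \partial_{x^i} + \sum_{i,k} b_i^{(k)} \partial_{p_i^{(k)}}$ lies in $\Ker d\pi_j$ exactly when $a_i = 0$ and $b_i^{(j)} = 0$ for all $i$; requiring $v \in \bigcap_{j=1}^r \Ker d\pi_j$ forces $a_i = 0$ and $b_i^{(j)} = 0$ for every $i$ and every $j$, hence $v = 0$. The common kernel is therefore trivial, which is precisely the graded counterpart of the nondegeneracy condition $\cap \Ker \omega_j = 0$ recorded in Example~\ref{ex:PS}.

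The only point requiring genuine care — and hence the main, though minor, obstacle — is the bookkeeping intrinsic to the graded setting: one must verify that the coordinate vector fields above furnish a bona fide homogeneous frame for $T(\oplus_r T^*[1]M)$ compatible with the $\mathbb Z$-grading, and that the pointwise differential is consistent with the sheaf-of-functions description of maps of graded manifolds. Since $\oplus_r T^*[1]M$ is a vector bundle (indeed an $N$-manifold, as already observed) and each $\pi_j$ is a bundle projection, the transition functions are fiberwise linear and the frame is manifestly homogeneous, so this verification is routine and the argument goes through unobstructed.
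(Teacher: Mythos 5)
Your proof is correct. The paper in fact supplies no argument for this proposition at all — it is introduced with ``it is an easy exercise to verify'' — and your coordinate computation is exactly the intended routine verification: in an adapted chart the Whitney sum is the base together with $r$ families of degree-$1$ fiber coordinates, each $\pi_j$ is the fiberwise-linear projection forgetting the other families, and intersecting the kernels kills both the base and all fiber directions, mirroring the nondegeneracy condition $\cap\,\Ker\,\omega_j=0$ of Example~\ref{ex:PS}.
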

By using the previous statement we can prove that $\oplus_rT^*[1]M$ is a 1-graded poly-symplectic manifold (cf. \cite{PPSM}) with 1-degree poly-symplectic form 
\begin{equation}\label{eq:w in T*[1]M}
\w=\oplus_j \pi_j^*\w_{can}=(\w_{can},\overset{r}{\dots},\w_{can}).
\end{equation}
Note that in this case, the induced even and odd coordinates $p_j$ and $q_j^l$ in the Whitney sum from the respective graded coordinates $ (p_j,q_j) $ on each $ T^*[1]M $ yield
$$\w=\sum_{j,l}dq_j^ldp_j.$$ 
In general it is not expected that general 1-graded poly-symplectic graded manifolds have a set of local coordinates, due to the fact that in standard poly-symplectic manifolds such coordinates exists under additional assumptions (see for example \cite{FoGo,NMA}). However, we have the following set of local coordinates:

\begin{prop}\label{prop:w in coord}
For any poly-symplectic graded manifold $(\mathcal{M},\w)$ with odd $p_j$ and even $q_l$ coordinates we get that $$\w=\sum_{j=1}^m\sum_{l=1}^sc_{jl}dq_ldp_j$$ where the $c_{jl}$ are constant functions.
\end{prop}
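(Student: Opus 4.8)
The plan is to run a graded, poly-symplectic version of the Schwarz--Roytenberg argument (Theorem~\ref{Schwarz-Roytenberg}), exploiting the fact that the weight-$1$ homogeneity $\Lie_E\w=\w$ rigidifies $\w$ almost completely before any coordinates are chosen. First I would observe that $\w$ is automatically exact: since $d\w=0$, Cartan's formula gives $\Lie_E\w=\iota_E d\w+d\iota_E\w=d\iota_E\w$, so from $\Lie_E\w=\w$ we obtain $\w=d\alpha$ with primitive $\alpha:=\iota_E\w$, itself a $1$-form of weight $1$.

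Next I would pin down the shape of $\alpha$ by weight counting. Writing the even (degree $0$) coordinates as $q_l$ and the odd (degree $1$) coordinates as $p_j$, the Euler field is $E=\sum_j p_j\,\partial_{p_j}$, so $\iota_E\,dq_l=0$ while $\iota_E\,dp_j=p_j$. Contracting any weight-$1$ monomial therefore annihilates every term except those of the form $p_j\,dq_l$, forcing $\alpha=\sum_{l,j}c_{lj}\,p_j\,dq_l$ with coefficients $c_{lj}$ of weight $0$, i.e.\ genuine functions $c_{lj}(q)$ of the even coordinates only. Differentiating back gives $\w=d\alpha=\sum_{l,j}c_{lj}(q)\,dq_l\,dp_j+\sum_{a,l,j}\partial_a c_{lj}(q)\,p_j\,dq_a\,dq_l$, so that, up to the correction term carrying the derivatives of the $c_{lj}$, the form already has the asserted mixed shape.

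It then remains to make the coefficients constant, which is where nondegeneracy enters. The condition that $\w^\sharp:T\calM\to\oplus_r T^*[k]\calM$ have trivial common kernel forces the coefficient matrix $(c_{lj}(q))$ to have full rank at each point; I would then perform a degree-preserving change of the odd coordinates, $\tilde p_j=\sum_k M_{jk}(q)\,p_k$ with $M(q)$ invertible, chosen so as to absorb the $q$-dependence of the $c_{lj}$. A direct computation (already transparent in the rank-one model, where $\tilde p=c(q)p$ turns $c(q)\,dq\,dp$ into $dq\,d\tilde p$) shows that this single change both normalizes the leading coefficients to constants and cancels the $\partial c\cdot p\,dq\,dq$ correction, leaving $\w=\sum_{j,l}c_{jl}\,dq_l\,dp_j$ with the $c_{jl}$ constant.

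The main obstacle is the poly (vector-valued) nature of $\w$: since $\w^\sharp$ lands in the Whitney sum $\oplus_r T^*[k]\calM$ rather than a single shifted cotangent bundle, one cannot simply apply Schwarz's theorem componentwise and hope the resulting coordinates agree. The content is to produce one coordinate change $M(q)$ that simultaneously trivializes all $r$ components, and this is exactly what the common-kernel form of nondegeneracy guarantees; verifying that this poly-Darboux normalization is consistent across the $r$ slots, and compatible with the $\partial c$ correction dictated by $d\w=0$, is the step requiring the most care.
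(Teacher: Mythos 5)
The first half of your argument is correct, and it is essentially the paper's own two-line proof done carefully: the paper asserts that the grading forces $\w=\sum_{j,l}c_{jl}\,dq_l\,dp_j$ and that the identity $\w=d(i_E\w)$ then forces the $c_{jl}$ to be constant, while you correctly track the extra term $\sum\partial_a c_{jl}\,p_j\,dq_a\,dq_l$ produced by $d(i_E\w)$. Because of that term, the identity $\w=d(i_E\w)$ only imposes the symmetry $\partial_a c_{jl}=\partial_l c_{ja}$, never constancy: concretely, $\w=e^{q}\,dq\,dp$ on $T^*[1]\R$ is closed, homogeneous and nondegenerate and satisfies $\w=d(i_E\w)$ with a visibly non-constant coefficient. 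So, read literally (``constant in the given coordinates''), the proposition is false, and it can only be an existence-of-coordinates statement -- which is how you read it. Your move of repairing things by a degree-preserving change of odd coordinates is therefore a genuine and necessary departure from the paper's route, and for $r=1$ it closes cleanly with nothing left to cancel: nondegeneracy forces the matrix $(c_{jl}(q))$ to be square and invertible, so $\tilde p_l:=\sum_j c_{jl}(q)\,p_j$ are legitimate odd coordinates, the primitive becomes $i_E\w=\pm\sum_l\tilde p_l\,dq_l$, and hence $\w=\pm\sum_l dq_l\,d\tilde p_l$.

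The gap is your final claim that the common-kernel form of nondegeneracy ``is exactly what guarantees'' one matrix $M(q)$ trivializing all $r$ components simultaneously. It does not, and the claim is false for $r\ge 2$. On the graded manifold with a single even coordinate $q$ and a single odd coordinate $p$, take $\w=(dq\,dp,\;e^{q}\,dq\,dp)$: each component is closed and of weight one, and the common kernel is trivial because the first component alone is already nondegenerate. Yet every degree-preserving change of coordinates has the form $Q=f(q)$, $P=g(q)p$, under which $dQ\,dP=f'g\,dq\,dp$; if both components were constant multiples $c_1,c_2$ of $dQ\,dP$, we would get $e^{q}=c_2/c_1$, a contradiction. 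What simultaneous trivialization actually requires is that the $rs$ fiberwise-linear functions $\tilde p^{\,i}_l:=\sum_j c^{i}_{jl}(q)\,p_j$ span, over the constants $\R$, a space of dimension at most $m$; pointwise injectivity of $\w^\sharp$ gives only pointwise spanning, which is strictly weaker (in the example, $p$ and $e^{q}p$ are pointwise proportional but $\R$-linearly independent as functions). To be fair, this defect sits in the statement itself and in the paper's own proof, which breaks at exactly the same spot; the ``exact $r$-poly-symplectic'' condition introduced right after the proposition is best understood as the extra hypothesis needed to make this step -- and with it Theorem~\ref{thm:poly-Schwarz} -- actually go through.
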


\begin{proof}
The condition that $\w$ preserves the grading yields that $\w=\sum_{j,l}c_{jl}dq_ldp_j$ for some smooth function $c_{jl}$. As the relation $\w=d(i_E\w)$ holds, we conclude that $c_{jl}$ are constant.
\end{proof}
We are interested in a particular case of such constant functions $c_{jl}$. We call the $N$-manifold $(\mathcal{M},\w)$ {\bf exact $r$-poly-symplectic} if $s=rm$ and the matrix $(c_{jl})_{m\times s}$ is equivalent to $ (\mathrm{Id}_{m\times m}\overset{{\small r}}{\cdots} \mathrm{Id}_{m\times m})_{m\times s} $.
\\
In Section~3. of \cite{Roy} the author presents a sketch of the proof of Schwarz's theorem in terms of the graded Poisson bracket. Here we will describe the idea of the same proof but in terms of the graded differential forms. To do that, we first note that if $ (M,\w) $ is 1-graded symplectic $ N $-manifold then $ \mathcal{M} $ is a vector bundle $ A\to M $, so 
\[ \w^\sharp|_{pt} :A\times TM|_{pt}\to A^*[1]\times T^*M|_{pt}.\]
The condition of $ \w $ being grading-preserving implies that odd coordinates are carried out to even coordinates and vice-versa via $ \w^\sharp $. The homogeneous condition and no-degeneracy of the symplectic form says that $ TM\cong A^*[1]$. Finally, the Leibniz rule from vector fields and the Jacobi identity from closedness of $ \w $ imply that the previous identification induces a symplectomorphism $ \mathcal{M}\cong T^*[1]M $.

It is worth to mention that this is a canonical argument so it can be adapted to the poly-symplectic case. Indeed, when we consider the identification of $ \mathcal{M} $ with $ A\to M $ (which just comes from the 1-degree shift cf. \cite{Roy}), we need an additional ``dimensional" condition guaranteeing that \[ A|_{pt}\cong \oplus_r T_{pt}^*M \] and the remainder of the proof follows the same ideas above. A straightforward verification gives that such additional condition is just $(\mathcal{M},\w)$ being exact $r$-poly-symplectic. Thus we have proved the following
\begin{thm}[1-shifted poly-symplectic Schwarz's Theorem]\label{thm:poly-Schwarz}
For an exact, degree 1, $ r $-poly-symplectic manifold $(\mathcal{M},\w)$ there exists a poly-symplectomorphism $ \mathcal{M}\cong \oplus_r T^*[1]M $ for $ M $ a standard manifold.
\end{thm}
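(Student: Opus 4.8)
The plan is to run the canonical argument behind Roytenberg's theorem (Theorem~\ref{Schwarz-Roytenberg}, \cite{Roy}) one step at a time, inserting the extra dimensional bookkeeping forced by the exactness hypothesis at the point where it is needed. First I would invoke the structure theory of nonnegatively-graded manifolds of degree $1$: since all generators of $\mathcal O_{\mathcal M}$ sit in degrees $0$ and $1$, the body $M$ is an ordinary manifold and $\mathcal M$ is a $1$-shifted vector bundle, $\mathcal M\cong A[1]$ for some $A\to M$, with the even $q_l$ serving as base coordinates of $M$ and the odd $p_j$ as fibre coordinates of $A[1]$. This is exactly the identification $\mathcal M\cong A\to M$ (coming ``just from the $1$-degree shift'', cf. \cite{Roy}) already invoked in the discussion preceding the statement.

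Next I would put $\w$ into normal form. By homogeneity $\Lie_E\w=\w$ together with Proposition~\ref{prop:w in coord}, every local representative of $\w$ equals $\sum_{j,l}c_{jl}\,dq_l\,dp_j$ with constant coefficients $c_{jl}$, so $\w^\sharp$ exchanges the even/base directions with the odd/fibre directions. This is precisely where the exact $r$-poly-symplectic hypothesis enters: it forces $s=rm$ and normalises $(c_{jl})$ to $r$ stacked copies of $\mathrm{Id}_{m\times m}$, which says that fibrewise $A\big|_{\mathrm{pt}}\cong\oplus_r T^*_{\mathrm{pt}}M$ and that under $\w^\sharp$ the tangent space $T_{\mathrm{pt}}M$ is identified with $A^*[1]\big|_{\mathrm{pt}}$ in $r$ mutually compatible copies. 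Since the $1$-shift commutes with the Whitney sum over a fixed base, taking $A[1]$ yields a fibrewise isomorphism $\mathcal M=A[1]\cong\oplus_r T^*[1]M$ covering $\mathrm{id}_M$.

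The remaining, and genuinely substantive, step is to promote this pointwise linear-algebra identification to a global poly-symplectomorphism, i.e. to verify that the resulting bundle map intertwines $\w$ with the canonical form $\oplus_j\pi_j^*\w_{can}$ of \eqref{eq:w in T*[1]M}. Here I would argue as in the ordinary case: the Leibniz rule for the Hamiltonian vector fields of the coordinate functions, together with the Jacobi identity encoded in $d\w=0$, pins down the transition data of $A$ to be those of a cotangent bundle, so the fibrewise identification is compatible with changes of chart and assembles into a well-defined global map. I expect this globalisation to be the main obstacle, since it is exactly where closedness and nondegeneracy must be combined, and in the poly-symplectic setting one carries the extra burden of checking that the identification respects the $r$-fold direct-sum decomposition. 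The proposition asserting that the canonical projections $\pi_j$ have $d\pi_j$ with trivial common kernel lets me treat the $r$ components of $\w$ independently, so that no integrability condition beyond $d\w=0$ is required; this reduces the poly-symplectic globalisation to $r$ copies of the classical one and completes the identification $\mathcal M\cong\oplus_r T^*[1]M$.
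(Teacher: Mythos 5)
Your proposal is correct and follows essentially the same route as the paper: identify the degree-$1$ manifold $\mathcal{M}$ with a shifted vector bundle $A[1]\to M$, use homogeneity (Proposition~\ref{prop:w in coord}) to put $\w$ in constant-coefficient normal form, observe that exactness is precisely the dimensional condition giving $A\big|_{pt}\cong\oplus_r T^*_{pt}M$, and invoke the Leibniz rule together with closedness of $\w$ to promote the fibrewise identification to a global poly-symplectomorphism, exactly as in the adaptation of Roytenberg's argument \cite{Roy} sketched before Theorem~\ref{thm:poly-Schwarz}. Your added remark that the trivial common kernel of the $d\pi_j$ lets one handle the $r$ summands separately is a reasonable gloss on the paper's ``the remainder of the proof follows the same ideas,'' not a genuinely different method.
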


\subsection{Transgression}
This section is devoted to induce a poly-symplectic structure on the space of maps when the target space is endowed with a graded poly-symplectic structure. The first step towards achieving this goal is to consider the graded poly-symplectic structure on the graded manifold $\oplus_rT^*[1]M$, with its canonical form as in \eqref{eq:w in T*[1]M}. Following the same ideas, but in the graded case, as in  Theorem~\ref{thm:PPSM} (see also Proposition~4.3 in \cite{PPSM}), we can prove our first transgression result:
\begin{prop}
For any two manifold $ \Sigma, M $ with boundary $\partial \Sigma$ we get that $$ \mathcal{F}=Map(T[1]\partial \Sigma,\oplus_rT^*[1]M) $$ is a weak poly-symplectic graded manifold of degree 0.
\end{prop}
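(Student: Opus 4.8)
The plan is to transgress the canonical poly-symplectic form \eqref{eq:w in T*[1]M} on the target $\oplus_r T^*[1]M$ componentwise, reducing the whole statement to $r$ copies of the symplectic transgression Theorem~\ref{AKSZ-Transgression}, and then to promote the triviality of the common kernel of the $d\pi_j$ to the mapping space. Write $\mathcal{N}=T[1]\partial\Sigma$, fix the nondegenerate Berezinian measure $\mu$ on $\mathcal{N}$ induced by a volume form on $\partial\Sigma$, and let $\mathbb{T}=\mu_*\ev^*$ be the transgression map. Since $\w=(\w_{can},\dots,\w_{can})$ with $\w_j=\pi_j^*\w_{can}$, I would define the $\mathbb{R}^r$-valued $2$-form $\Omega=\mathbb{T}\w=(\Omega_1,\dots,\Omega_r)$ on $\mathcal{F}$, with $\Omega_j=\mathbb{T}\w_j$.

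First I would record that the canonical projections $\pi_j$ induce, by post-composition, projections $\Pi_j:\mathcal{F}\to\mathcal{F}_j:=\Map(\mathcal{N},T^*[1]M)$, and that by naturality of $\ev$ one has $\ev^*\w_j=\ev^*\pi_j^*\w_{can}=(\pi_j\circ\ev)^*\w_{can}$, which factors through $\Pi_j$; pushing forward along $\mathcal{N}$ then yields $\Omega_j=\Pi_j^*\Omega_{can}$, where $\Omega_{can}=\mathbb{T}\w_{can}$ is the transgressed form on $\mathcal{F}_j$. By Theorem~\ref{AKSZ-Transgression} each $\Omega_{can}$ is a (weak) symplectic form on $\mathcal{F}_j$, so each $\Omega_j$ is closed, and the degree count is immediate: $\w$ has degree $1$ and Berezinian integration over $T[1]\partial\Sigma$ lowers the internal degree by $\dim\partial\Sigma=1$, whence $\Omega$ has degree $0$. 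Homogeneity $\Lie_E\Omega=\Omega$ descends from $\Lie_E\w=\w$ on the target, since $\mathbb{T}$ intertwines the Euler vector fields.

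The main step, and the expected obstacle, is weak non-degeneracy of $\Omega$ as an $\mathbb{R}^r$-valued form. A tangent vector $V\in T_f\mathcal{F}\simeq\Gamma(\mathcal{N},f^*T\mathcal{M})$ lies in $\Ker\Omega_j$ precisely when $\Omega_{can}(d\Pi_j V,d\Pi_j W)=0$ for all $W$; since each $\pi_j$ is a surjective submersion, the induced $\Pi_j$ is one as well, so $d\Pi_j W$ sweeps out all of $T_{\Pi_j(f)}\mathcal{F}_j$ and weak non-degeneracy of $\Omega_{can}$ forces $d\Pi_j V=0$. Thus $V\in\bigcap_j\Ker\Omega_j$ iff $d\Pi_j V=0$ for every $j$, i.e. $V$ lies in the common kernel of the $d\Pi_j$. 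Because the base component of $f$ is shared by all the $\Pi_j$ while the fibre component in the $j$-th summand is retained only by $\Pi_j$, this common kernel agrees fibrewise with $\bigcap_j\Ker d\pi_j$, which is trivial by the Proposition preceding \eqref{eq:w in T*[1]M}; hence $V=0$. This is exactly the graded, Berezinian-integrated analogue of the kernel computation carried out after Theorem~\ref{thm:PPSM}.

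The delicate points I anticipate are infinite-dimensional and graded rather than formal: justifying that fibrewise vanishing may be extracted from the vanishing of the transgressed pairing against all test vectors $W$ (a localization argument using nondegeneracy of $\mu$ and the freedom to take $W$ supported near a point of $\partial\Sigma$), and confirming that the weak symplectic non-degeneracy supplied by Theorem~\ref{AKSZ-Transgression} is the correct notion on the Banach graded manifold $\mathcal{F}_j$. Once these are in place, the three conditions—closed, degree $0$, and weakly non-degenerate as an $\mathbb{R}^r$-valued form—are all verified, so $\mathcal{F}$ is a weak poly-symplectic graded manifold of degree $0$.
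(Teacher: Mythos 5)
Your proposal is correct, and its skeleton matches the paper's: both define the transgressed form componentwise, $\Omega=(\Omega_1,\dots,\Omega_r)$ with $\Omega_j=\Pi_j^*\Omega_{can}$ (the paper writes this as $\Omega=(\pi_1^*\w_{st},\ldots,\pi_r^*\w_{st})$), and both take closedness as essentially immediate. Where you genuinely diverge is the non-degeneracy step. The paper argues directly on $\mathcal{F}$ with special test vectors---constant base paths and fiberwise linear paths---thereby reducing $\Ker\Omega$ to the kernel of the target form $\w$ of \eqref{eq:w in T*[1]M}, exactly as in the kernel computation following Theorem~\ref{thm:PPSM}. You instead black-box Theorem~\ref{AKSZ-Transgression} to get weak non-degeneracy of $\Omega_{can}$ on each factor $\mathcal{F}_j=\Map(\mathcal{N},T^*[1]M)$, then use surjectivity of $d\Pi_j$ (which holds because $\pi_j$ admits the obvious section inserting zeros in the other summands) to conclude $d\Pi_j V=0$ for $V\in\Ker\Omega_j$, and finally kill $V$ using the triviality of the common kernel of the $d\pi_j$ from the Proposition preceding \eqref{eq:w in T*[1]M}. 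Your route buys modularity: the hard infinite-dimensional analysis is quarantined inside the cited Cattaneo--Felder theorem, and the poly-specific content is isolated in the purely linear-algebraic statement $\bigcap_j\Ker d\pi_j=\{0\}$, applied pointwise (note that this last step needs no localization argument at all---$d\Pi_jV$ is just $d\pi_j\circ V$ as a section, so its vanishing for all $j$ is a fibrewise condition); the worry you flag at the end is therefore only relevant inside the quoted theorem, not in your argument. The paper's route is more self-contained and is the one that generalizes directly to Theorem~\ref{thm:transgerssion2}, where the target is an abstract exact $r$-poly-symplectic manifold and no splitting into honest cotangent factors with a citable symplectic transgression theorem is available a priori (there one first invokes Theorem~\ref{thm:poly-Schwarz}).

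One small correction: you assert both that $\Omega$ has degree $0$ and that $\Lie_E\Omega=\Omega$. These are incompatible; for a degree-$0$ form the homogeneity condition is $\Lie_E\Omega=0$, which is what transgression actually gives here, since Berezinian integration over $T[1]\partial\Sigma$ lowers the degree of the degree-$1$ form $\w$ by $\dim\partial\Sigma=1$. This slip does not affect the rest of your argument.
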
\label{thm:transgerssion1}

\begin{proof}
Recall that Equation~\eqref{eq:w in T*[1]M} gives the graded poly-symplectic form on $ \oplus_rT^*[1]M $. By the same construction, but now using transgression, we can define 
\[ \Omega=(\pi_1^*\w_{st},\ldots,\pi_r^*\w_{st}) \] where $ \w_{st}=\int_{T[1]\partial \Sigma}\delta q\delta p $. It is a straightforward computation to verify that $ \Omega  $ is closed. For the claim about the kernel we should consider constant base paths  and fiberwise linear paths. Those type of paths led us to study the kernel of $ \Omega $ via the kernel of $ \w $ in Equation~\eqref{eq:w in T*[1]M} which is trivial.
\end{proof}

The same result works for an exact 1-degree poly-symplectic manifold, but in this case the integration on fibers is replaced by the  integration with respect to a Berezinian. This consideration yields our main transgression theorem:
\begin{thm}\label{thm:transgerssion2}
 The space $ \mathcal{F}=Map(\mathcal{N},\mathcal{M}) $ is weak-poly-symplectic graded manifold  of degree $1-k$ with cohomological vector field $Q_{total}$, where $(\mathcal{M},\w)$ is an exact 1-degree $ r $-poly-symplectic manifold  and $ (\mathcal{N},Q_\mathcal{N},\mu) $ is a $ Q $--manifold with cohomological vector field $ Q_\mathcal{N} $, with Berezinian $ \mu $ and dimension $k$ (in degree 0) \footnote{In most of the examples regarding AKSZ theories, $\mathcal N\cong T[l]N$, so $k= \dim (N)$.}.
	 
\end{thm}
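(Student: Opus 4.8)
The plan is to combine the two preceding results, Theorem~\ref{thm:poly-Schwarz} and Proposition~\ref{thm:transgerssion1}, by reducing the general exact $1$-degree $r$-poly-symplectic target to the canonical model $\oplus_r T^*[1]M$. First I would invoke Theorem~\ref{thm:poly-Schwarz}: since $(\mathcal{M},\w)$ is assumed exact, degree $1$, and $r$-poly-symplectic, there is a poly-symplectomorphism $\mathcal{M}\cong \oplus_r T^*[1]M$ for some ordinary manifold $M$. This lets me transport the problem to a setting where the poly-symplectic form is the canonical one $\w=\oplus_j\pi_j^*\w_{can}$ of Equation~\eqref{eq:w in T*[1]M}, so that the transgression map $\mathbb{T}=\mu_*\,\mathrm{ev}^*$ acts componentwise on the direct summands.

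Next I would define the candidate poly-symplectic structure on $\mathcal{F}=\mathrm{Map}(\mathcal{N},\mathcal{M})$ by applying transgression to each component, namely $\Omega=(\mathbb{T}\pi_1^*\w_{can},\ldots,\mathbb{T}\pi_r^*\w_{can})$, exactly as in the proof of Proposition~\ref{thm:transgerssion1} but replacing fiber integration by Berezinian integration against $\mu$. Closedness of each $\Omega_j$ is inherited from closedness of $\w_{can}$ because $\mathbb{T}$ is a chain map, as recorded in the discussion of $\mu_*$; the degree shift is the standard AKSZ bookkeeping, where integrating over $\mathcal{N}$ of dimension $k$ (in degree $0$) against the Berezinian lowers the degree by $k$, so the degree-$1$ target form transgresses to a degree $1-k$ form on $\mathcal{F}$, accounting for the claimed degree. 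For the nondegeneracy (in the weak sense), I would restrict to test tangent vectors supported by constant base paths and fiberwise-linear paths, precisely the reduction used in Proposition~\ref{thm:transgerssion1} and in the sketch following Theorem~\ref{thm:PPSM}, so that the kernel of $\Omega$ is controlled by the common kernel of the component forms $\w_{can}$, which is trivial because the Schwarz model exhibits the common kernel of the $d\pi_j$ as trivial.

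Finally I would address the cohomological vector field $Q_{total}$: using the canonical identification $T_f\mathrm{Map}(\mathcal{N},\mathcal{M})\simeq \Gamma(\mathcal{N},f^*T\mathcal{M})$, I would set $Q_{total}=\hat{Q}_{\mathcal{N}}+\breve{Q}_{\mathcal{M}}$ as in the transgression subsection, where $Q_{\mathcal{M}}$ is the cohomological field on $\mathcal{M}$ transported from the canonical one on $\oplus_r T^*[1]M$ and $Q_{\mathcal{N}}$ is the given field on $\mathcal{N}$; by the general fact cited there, any linear combination of self-commuting lifts is again self-commuting, so $[Q_{total},Q_{total}]=0$ and $Q_{total}$ has degree $+1$.

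The main obstacle I expect is the nondegeneracy argument in the genuinely graded, Berezinian setting. In the ungraded path-space case of Theorem~\ref{thm:PPSM} one can explicitly probe the kernel with constant-base, fiberwise-linear variations and invoke $S^\circ=\{0\}$; here the analogous manipulation must be carried out for Berezinian integration over the odd directions of $\mathcal{N}$, and one must verify that the reduction to the common kernel of the $\w_{can}$ components genuinely survives the odd integration rule~\eqref{Odd Berezinian}, rather than being obstructed by it. Making this reduction rigorous — showing that a tangent vector annihilated by every $\Omega_j$ forces each component of the variation to vanish — is the delicate step, and it is exactly where the ``weak'' qualifier becomes essential, since in infinite dimensions one only obtains injectivity of $\Omega^\sharp$, not surjectivity.
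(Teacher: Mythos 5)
Your proposal follows essentially the same route as the paper's own proof: reduce to the canonical model $\oplus_r T^*[1]M$ via Theorem~\ref{thm:poly-Schwarz}, define $\Omega=\mathbb{T}\w$ by Berezinian transgression, check weak nondegeneracy by probing the kernel with constant base maps and fiberwise (odd-wise) linear maps so that it reduces to the trivial kernel of the canonical form, and take $Q_{total}$ to be the sum of the lifts $\hat{Q}_{\mathcal{N}}+\breve{Q}_{\mathcal{M}}$. Your additional remarks on the chain-map property, the $1-k$ degree bookkeeping, and the delicacy of the kernel argument under the odd integration rule are consistent elaborations of the same argument rather than a different approach.
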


\begin{proof}
The idea is to adapt the same proof but considering the evaluation $ ev $ and projection $ P $ maps from $ \mathcal{N}\times \mathcal{F} $ to $ \mathcal{M} $ and $ \mathcal{F} $ respectively. Such maps define a graded morphism $ \mathbb{T}:=P_*ev^*: \Omega^q(\mathcal{M})\to \Omega^q(\mathcal{F})$ that, in coordinates, yields \[ \Omega:=\mathbb{T}\w=\int_{\mathcal{N}}\sum_{j,l}\delta q_j^l \delta p_j d\mu, \] via the poly-symplectomorphism as in the Schwarz theorem \ref{thm:poly-Schwarz}. As in the previous proof, the only condition we should verify is that the kernel of $ \Omega $ is trivial, but this lies on the kernel of $ \w=\sum_{j,l}d q_j^l d p_j $ by using constant base maps and odd-wise linear maps on the Berezinian integration.

It only remains to construct the cohomological vector field, but this comes from the lifting of the corresponding cohomological vector fields, that is 
\[ Q_{total}=Q_{\mathcal{N}}^{\mathrm{lift}}+Q_{\mathcal{M}}^{\mathrm{lift}}.\]
\end{proof}

\subsection{The Classical Master Equation: The poly-Poisson Sigma Model Revisited }\label{Poly-CME}
This final section deals with the study of the poly--Poisson version of the classical master equation,  which was introduced in Proposition~\ref{thm:CME}, and how this result explains the choice of the action functional in \cite{PPSM}, for the poly--Poisson Sigma Model. The ingredients for the classical master equation are the transgression of poly--Poisson structures and a well defined {\it Poisson-type} bracket on {\it admissible} functions. The transgression result in Theorem~\ref{thm:transgerssion2} and the bracket in \eqref{eq:Poisson-bracket} yield the following result:

\begin{thm}[Poly--Poisson CME]\label{PCME}
Consider the same conditions as in the Theorem~\ref{thm:transgerssion2}, i.e. let $(\mathcal{M},\Omega)$ be an exact 1-degree $ r $-poly-symplectic manifold with basic form $\theta$, i.e $\Omega=d\theta$, and let $ (\mathcal{N},Q_\mathcal{N},\mu) $ be a $ Q $-manifold with cohomological vector field $ Q_\mathcal{N} $ and Berezinian $ \mu $. If the Berezinian $\mu$ is $Q_\mathcal{N}$-invariant, then the space $(\mathcal{F}=\mathrm{Map}(\mathcal{N},\mathcal{M}),\Omega=\mathbb{T}\omega)$ satisfies the following:
\begin{enumerate}
    \item $\hat{Q}_N$ is Hamiltonian vector field with Hamiltonian function, i.e. if $\Theta= \mathbb T \theta$ we get \[\hat{S}=-\iota_{\hat{Q}_{\mathcal N}} \Theta ,\] 
\item If there is a cohomological vector field $Q_M$ that admits a Hamiltonian function $S_M$, then the transgression of $S_M$ together with $\hat{S}$ defines a solution of the Classical Master Equation, i.e defining the transgression $\breve{S}= \mu_{*} \mbox {ev}^* S_M,$ then 
\begin{equation}\label{CME}
(\hat{S}+ \breve{S},\hat{S}+ \breve{S})=0.
\end{equation}
\end{enumerate}
\end{thm}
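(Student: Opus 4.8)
The plan is to transport the proof of the ordinary classical master equation (Theorem~\ref{thm:CME}) to the poly-symplectic category, performing every step componentwise in the $\mathbb R^r$-direction and substituting the poly-symplectic transgression of Theorem~\ref{thm:transgerssion2} for its symplectic counterpart. I fix notation as in Theorem~\ref{thm:transgerssion2}, writing $\omega=d\theta$ for the (exact) target poly-symplectic form with primitive $\theta$, and $\Omega=\mathbb T\omega$, $\Theta=\mathbb T\theta$ for their transgressions on $\mathcal F=\mathrm{Map}(\mathcal N,\mathcal M)$. The first observation is that $\mathbb T=\mu_*\mathrm{ev}^*$ is a chain map, so it commutes with the de~Rham differential; hence $\Omega=\mathbb T\,d\theta=d\,\mathbb T\theta=d\Theta$ as an identity of $\mathbb R^r$-valued forms.

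For part~(1) I would apply Cartan's formula to the lifted source field $\hat Q_{\mathcal N}$:
\[
\iota_{\hat Q_{\mathcal N}}\Omega=\iota_{\hat Q_{\mathcal N}}d\Theta=\mathcal L_{\hat Q_{\mathcal N}}\Theta-d\,\iota_{\hat Q_{\mathcal N}}\Theta .
\]
The essential input is $\mathcal L_{\hat Q_{\mathcal N}}\Theta=0$, the poly-symplectic version of the standard AKSZ lemma: since $\hat Q_{\mathcal N}$ lifts $Q_{\mathcal N}$ through $\mathrm{ev}$ and $\theta$ is pulled back from the target, $\mathcal L_{\hat Q_{\mathcal N}}\,\mu_*\mathrm{ev}^*\theta$ reduces to $\mu_*$ of a source Lie derivative along $\mathcal N$, which is annihilated by graded Stokes exactly when $\mu$ is $Q_{\mathcal N}$-invariant. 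As $\theta$ is a fixed $\mathbb R^r$-valued form, this runs identically in each of the $r$ slots. Thus $\iota_{\hat Q_{\mathcal N}}\Omega=d(-\iota_{\hat Q_{\mathcal N}}\Theta)=d\hat S$ with $\hat S=-\iota_{\hat Q_{\mathcal N}}\Theta$, i.e.\ $\hat Q_{\mathcal N}$ is Hamiltonian with $\mathbb R^r$-valued Hamiltonian $\hat S$.

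For part~(2) I would first transgress the target Hamiltonian: from $\iota_{Q_{\mathcal M}}\omega=dS_M$ and the intertwining of $Q_{\mathcal M}$ with its lift $\breve Q_{\mathcal M}$ under $\mathrm{ev}^*$, applying $\mathbb T$ yields $\iota_{\breve Q_{\mathcal M}}\Omega=d\breve S$ with $\breve S=\mu_*\mathrm{ev}^*S_M$. Adding the two relations and using $Q_{total}=\hat Q_{\mathcal N}+\breve Q_{\mathcal M}$ from Theorem~\ref{thm:transgerssion2} gives $\iota_{Q_{total}}\Omega=d\mathcal S$ with $\mathcal S=\hat S+\breve S$; since $\Omega^\sharp$ is injective this exhibits $Q_{total}$ as the Hamiltonian field of $\mathcal S$. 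The master equation is then read off from $(\mathcal S,\mathcal S)=\iota_{Q_{total}}d\mathcal S=\iota_{Q_{total}}\iota_{Q_{total}}\Omega$ together with $[Q_{total},Q_{total}]=0$ (the cohomological property supplied by Theorem~\ref{thm:transgerssion2}): the assignment $f\mapsto X_f$ is a graded Lie anti-homomorphism, so $X_{(\mathcal S,\mathcal S)}$ is proportional to $[Q_{total},Q_{total}]=0$, whence $d(\mathcal S,\mathcal S)=\iota_{X_{(\mathcal S,\mathcal S)}}\Omega=0$ and $(\mathcal S,\mathcal S)$ is locally constant; a homogeneity argument under the Euler field then forces it to vanish.

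The hard part will be the bracket-to-commutator correspondence in the poly-symplectic category. Unlike the symplectic case, $\Omega^\sharp$ is only injective, not an isomorphism, so not every function is Hamiltonian and the $\mathbb R^r$-valued bracket $(\cdot,\cdot)$ is defined only on the \emph{admissible} functions --- those whose differential lies in the image of $\Omega^\sharp$, in the sense of the bracket \eqref{eq:Poisson-bracket}. The genuine work is therefore to verify that $\hat S$, $\breve S$, $\mathcal S$ and $(\mathcal S,\mathcal S)$ are all admissible, that the induced bracket satisfies the graded Jacobi identity on them (the transgressed counterpart of the identity established just after Definition~\ref{Def:PolyP}), and that $f\mapsto X_f$ is indeed an anti-homomorphism onto the self-commuting field --- each of which relies on injectivity of $\Omega^\sharp$ to resolve the common-kernel ambiguity. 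Once admissibility and this correspondence are secured, closing the master equation by the homogeneity count is routine.
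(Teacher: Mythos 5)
Your part (1) is the same as the paper's: both reduce the Hamiltonian property of $\hat Q_{\mathcal N}$ to the single identity $\mathcal{L}_{\hat Q_{\mathcal N}}\Theta=0$, deduced from the $Q_{\mathcal N}$-invariance of the Berezinian (the paper simply cites \cite[Lemma~2.6]{CattaneoFelder3}); you just spell out the Cartan-formula step that the paper leaves implicit. Part (2) is where you genuinely diverge. The paper expands $(\hat S+\breve S,\hat S+\breve S)$ bilinearly and kills the three terms separately: $(\hat S,\hat S)=\iota_{[\hat Q_{\mathcal N},\hat Q_{\mathcal N}]}\Theta=0$ because $\hat Q_{\mathcal N}$ is cohomological; $(\breve S,\breve S)=\mu_*\mathrm{ev}^*(S_M,S_M)=0$ because transgression intertwines the brackets; and the cross term $(\hat S,\breve S)$ is first argued to be constant (the lifts $\hat Q_{\mathcal N}$ and $\breve Q_{\mathcal M}$ commute) and then shown to vanish by invoking the $Q_{\mathcal N}$-invariance of $\mu$ once more. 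You instead prove that $Q_{total}=\hat Q_{\mathcal N}+\breve Q_{\mathcal M}$ is Hamiltonian for $\mathcal S=\hat S+\breve S$, identify $(\mathcal S,\mathcal S)=\iota_{Q_{total}}\iota_{Q_{total}}\Omega$, and use $[Q_{total},Q_{total}]=0$ together with the bracket-to-commutator anti-homomorphism to conclude $(\mathcal S,\mathcal S)$ is locally constant, killing the constant by Euler-field homogeneity; this is the unified AKSZ/Roytenberg-style argument rather than the Cattaneo--Felder term-by-term one. Both are sound at the level of rigor the paper itself operates at: the admissibility and bracket/commutator issues you candidly defer are used silently by the paper as well (e.g.\ in $(\breve S,\breve S)=\mu_*\mathrm{ev}^*(S_M,S_M)$ and in $(\hat S,\hat S)=\iota_{[\hat Q_{\mathcal N},\hat Q_{\mathcal N}]}\Theta$), so you are not weaker there, only more explicit about the gap. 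What the paper's decomposition buys is that every term vanishes on the nose, with no appeal to degree counting; what yours buys is conceptual clarity --- the CME becomes manifestly the statement that the Hamiltonian vector field of $\mathcal S$ squares to zero. One caveat specific to your route: the homogeneity count gives $(\mathcal S,\mathcal S)$ internal degree $3-k$ (with $k$ as in Theorem~\ref{thm:transgerssion2}), so the ``locally constant implies zero'' step fails exactly when $k=3$; this is harmless for the poly-Poisson sigma model application ($k=2$), but since the theorem is stated for general $k$, the paper's exact term-by-term cancellation is the more robust closing move.
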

\begin{proof}
For the proof of the first claim it is enough to verify that $\Lie_{\hat{Q}_\mathcal{N}}\Theta=0$ but this is consequence of the $Q_\mathcal{N}$-invariance of the Berezinian $\mu$. This last fact is general due to transgression, and it is proven in \cite[Lemma~2.6]{CattaneoFelder3}. The second claim is equivalent to show that 
\begin{eqnarray}
(\hat {\mathcal S}, \hat {\mathcal S})&=&0,\\
(\breve {\mathcal S}, \breve {\mathcal S})&=&0,\\
(\hat {\mathcal S}, \breve {\mathcal S})&=&0.
\end{eqnarray}
Since $Q_N$ is Hamiltonian, it follows that 
\[(\iota_{\hat{Q}_{\mathcal N}} \Theta,\iota_{\hat{Q}_{\mathcal N}} \Theta)=\iota_{[\hat Q_{\mathcal N}, \hat Q_{\mathcal N}] }\Theta=0.\] Therefore $(\hat {\mathcal S}, \hat {\mathcal S})=0$. 
\newline
Since $S_M$ is Hamiltonian, it follows directly that 
\[(\mu_{*} \mbox {ev}^* S_M,\mu_{*} \mbox {ev}^* S_M)=\mu_{*} \mbox {ev}^* (S_M,S_M)=0.\] Therefore 
$(\breve {\mathcal S}, \breve {\mathcal S})=0$. 
\newline
Since $\hat Q$ and $\breve Q$ commute, it follows that $(\hat S, \breve S)$ is constant (see \cite{CattaneoFelder3} for details). Furthermore, since $\mu$ is $Q_{\mathcal N}$-invariant, by Lemma 2.6 in \cite{CattaneoFelder3} it follows that
\[(\hat {\mathcal S}, \breve {\mathcal S})=\mathcal L_{\hat Q}\mu_*\mbox{ev}^*=0.\]
\end{proof}


The next step is to prove that the action functional of the poly-Poisson Sigma model (PPSM), rewritten in the AKSZ-formalism, is a solution of the classical master equation from Theorem \ref{PCME}. For this we must begin with a Lemma that will allow us to get a transgression for any poly--Poisson structure:


\begin{lem}\label{QP_Target}
Let $S$ be a sub-bundle of $\oplus_rT^*M$. Then $S[1]$ is a sub-bundle of $\oplus_rT^*[1]M$. Furthermore, $P$ determines a cohomological vector field on $S[1]$ if and only if $(S,P)$ is $r$-Poisson.
\end{lem}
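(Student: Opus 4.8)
The plan is to prove the statement in two parts. The first claim, that $S[1]$ is a sub-bundle of $\oplus_r T^*[1]M$, is essentially formal: the degree-shift functor $[1]$ applied fiberwise to the inclusion $S \hookrightarrow \oplus_r T^*M$ produces a sub-bundle $S[1] \hookrightarrow \oplus_r T^*[1]M$, since shifting the grading does not alter the underlying linear-algebraic fiber data nor the local triviality. I would state this and note that the transition functions of $S$ carry over verbatim to $S[1]$, so local frames of $S$ give local frames of $S[1]$ in the appropriate degree.

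The substantive content is the second claim, and here the plan is to translate the three conditions in Definition~\ref{Def:PolyP} into the graded language. First I would describe what a cohomological vector field on $S[1]$ means: a degree $+1$ vector field $Q$ with $[Q,Q]=0$. The natural candidate $Q$ is built from $P$; using local coordinates on $M$ together with the odd fiber coordinates on $S[1]$, one writes $Q$ so that on functions it reproduces the data of $P$ together with the de Rham differential, mirroring the structure of the Lichnerowicz--Poisson or Chevalley--Eilenberg differential associated to a Lie algebroid. The key dictionary is that a homological degree $+1$ vector field on $S[1]$ is equivalent to a Lie algebroid structure on $S \to M$, by the standard correspondence (as in Examples~\ref{ex:OddTangent} and~\ref{ex:oddCT} and the Vaintrob-type characterization). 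Under this dictionary, the anchor of the algebroid is precisely $P:S\to TM$ and the bracket on $\Gamma(S)$ is the bracket $\lfloor\cdot,\cdot\rfloor$ defined in \eqref{Def:bracketHP}.

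With this established, the proof reduces to matching conditions. I would show that $Q^2 = \tfrac12[Q,Q]=0$ holds if and only if the anchor $P$ and the bracket $\lfloor\cdot,\cdot\rfloor$ satisfy the Lie algebroid axioms, namely that $\Gamma(S)$ is closed under $\lfloor\cdot,\cdot\rfloor$, that the Jacobi identity holds, and the compatibility of $P$ with the bracket. These are exactly condition (iii) of Definition~\ref{Def:PolyP}, while conditions (i) and (ii) guarantee that $(S,P)$ is genuinely an $r$-Poisson structure rather than a degenerate one; condition (i) ensures skew-symmetry is respected by the bracket construction and condition (ii) ensures non-degeneracy so that $P$ together with the bracket assemble into a bona fide algebroid whose graded incarnation squares to zero. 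The verification that $Q^2=0$ unpacks into the Jacobi identity is the computational heart, and I would carry it out by evaluating $Q^2$ on the generators (the base functions and the fiber-linear functions on $S[1]$) and reading off the Leibniz rule and Jacobi identity term by term.

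The main obstacle I anticipate is the bookkeeping in the ``only if'' direction: one must check that if $Q$ is cohomological then $P$ and the induced bracket really satisfy \emph{all} the $r$-Poisson conditions, including the subtle point that the bracket $\lfloor\cdot,\cdot\rfloor$ closes on $\Gamma(S)$ — this closure is not automatic and must be extracted from $[Q,Q]=0$ acting on fiber coordinates. Since $S$ is a proper sub-bundle of $\oplus_r T^*[1]M$ rather than the whole thing, I expect the delicate step to be verifying that $Q$ is tangent to $S[1]$ (i.e. that $P$ indeed preserves the sub-bundle structure), which is where conditions (i) and (ii) enter decisively and where the argument differs from the standard Lie-algebroid--$Q$-manifold equivalence on a full shifted bundle.
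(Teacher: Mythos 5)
Your proposal takes essentially the same route as the paper, whose entire proof is a one-line appeal to the fact that $(S,P)$ is poly-Poisson precisely when $P$ is the anchor of a Lie algebroid structure on $S$ with bracket $\lfloor\cdot,\cdot\rfloor$, combined with the standard (Vaintrob-type) correspondence between Lie algebroid structures on $S$ and cohomological vector fields on $S[1]$ --- exactly the dictionary you spell out in more detail. One caveat: condition (ii) ($S^\circ=\{0\}$) does not actually enter the equivalence between $[Q,Q]=0$ and the algebroid axioms (skew-symmetry of $\lfloor\cdot,\cdot\rfloor$ follows from condition (i) by polarization, the Leibniz rule is automatic from the formula \eqref{Def:bracketHP}, and condition (iii) supplies closure and Jacobi), so your claim that (ii) is needed for the graded vector field to square to zero is imprecise --- though the paper's own one-line proof glosses over the same point.
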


\begin{proof}
This follows from a standard fact on vector bundles, just by noticing that $(S,P)$ is poly-Poisson if and only if $P:\to TM$ is an anchor map inducing a Lie algebroid structure in $S$.
\end{proof}

Now, we will consider the transgression with $\mathcal{N}=T[1]\partial \Sigma$ and $\mathcal{M}=S[1]$. Since the mapping space $(\mathrm{Map}(\mathcal{N},\oplus_rT[1]^*M),\omega)$ is graded poly-symplectic, then by integrating over the 1-degree fibers of $S[1]\overset{\tau}{\to}\oplus_rT[1]^*M$ we obtain that $\mathcal{F}:=(Map(\mathcal{N},S[1]),\omega_\tau:=\tau^*\omega)$ is also graded poly-symplectic. Moreover, as $\omega$ is exact then also $\omega_\tau=d\theta_\tau$, where $\theta_\tau:=\tau^*\theta$. We define $Q_\mathcal{N}$ to be the cohomological vector field $d$ on $\mathcal{N}$ and $Q_\mathcal{M}$ the vector field induced by $P$. Under this identification we get the following result:
\begin{prop}
$Q_\mathcal{F}:=\hat{Q}_\mathcal{N} +\breve{Q}_\mathcal{M}$ is a cohomological vector field on $\mathcal{F}$.
\end{prop}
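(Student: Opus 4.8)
The plan is to show that $Q_\mathcal{F}=\hat{Q}_\mathcal{N}+\breve{Q}_\mathcal{M}$ satisfies the two defining properties of a cohomological vector field, namely that it has degree $+1$ and that it self-commutes, $[Q_\mathcal{F},Q_\mathcal{F}]=0$. The lift constructions $\hat{(\,\cdot\,)}$ and $\breve{(\,\cdot\,)}$ on $\mathrm{Map}(\mathcal{N},\mathcal{M})$ are exactly those recalled before Theorem~\ref{AKSZ-Transgression}, using the identification $T_f\mathrm{Map}(\mathcal{N},\mathcal{M})\simeq\Gamma(\mathcal{N},f^*T\mathcal{M})$, so I will work directly with those formulas.

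First I would verify the degree condition. By Lemma~\ref{QP_Target}, $P$ determines a cohomological vector field $Q_\mathcal{M}$ on $\mathcal{M}=S[1]$, which by definition has degree $+1$; likewise $Q_\mathcal{N}=d$ on $\mathcal{N}=T[1]\partial\Sigma$ has degree $+1$. The horizontal lift $\hat{Q}_\mathcal{N}(f)(x)=d_xf\,Q_\mathcal{N}(x)$ and the vertical lift $\breve{Q}_\mathcal{M}(f)(x)=Q_\mathcal{M}(f(x))$ each preserve the internal grading while adding the degree of the respective source field, so both lifts are degree $+1$ vector fields on $\mathcal{F}$, and hence so is their sum.

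The substantive step is the self-commutation. Here I would invoke the general fact quoted in the excerpt just before Theorem~\ref{AKSZ-Transgression}: if $Q_\mathcal{N}$ and $Q_\mathcal{M}$ are each self-commuting, then any linear combination of their lifts $\hat{Q}_\mathcal{N}$ and $\breve{Q}_\mathcal{M}$ to the mapping space is again self-commuting (this is \cite{CattaneoFelder3}, Sec.~2.2.1). Thus it suffices to check that both source vector fields are cohomological. For $Q_\mathcal{N}=d$ on $T[1]\partial\Sigma$ this is immediate, since $d^2=0$ gives $[d,d]=0$ and $d$ has degree $+1$ (cf.\ Example~\ref{ex:OddTangent}). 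For $Q_\mathcal{M}$ this is precisely the content of Lemma~\ref{QP_Target}: because $(S,P)$ is $r$-Poisson, $P$ is an anchor inducing a Lie algebroid structure on $S$, and the associated homological vector field on $S[1]$ squares to zero, i.e.\ $[Q_\mathcal{M},Q_\mathcal{M}]=0$.

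I expect the main obstacle to be bookkeeping rather than conceptual: one must confirm that passing from $\oplus_rT^*[1]M$ to the sub-bundle $S[1]$ via $\tau$ does not spoil the lifting argument, i.e.\ that $Q_\mathcal{M}$ is genuinely tangent to $S[1]$ and that its lift $\breve{Q}_\mathcal{M}$ is well defined on $\mathcal{F}=\mathrm{Map}(\mathcal{N},S[1])$. This tangency is exactly what Lemma~\ref{QP_Target} guarantees (the cohomological vector field lives on $S[1]$, not merely on the ambient $\oplus_rT^*[1]M$), so once that lemma is in hand the cited commutation result applies verbatim and yields $[Q_\mathcal{F},Q_\mathcal{F}]=0$, completing the proof.
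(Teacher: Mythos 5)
Your proof is correct and follows exactly the route the paper intends: the paper states this proposition without a separate proof precisely because it follows from Lemma~\ref{QP_Target} (which makes $Q_\mathcal{M}$ cohomological on $S[1]$ when $(S,P)$ is $r$-Poisson) together with the fact cited before Theorem~\ref{AKSZ-Transgression} that linear combinations of lifts of self-commuting vector fields remain self-commuting (\cite{CattaneoFelder3}, Sec.~2.2.1). Your added remark that Lemma~\ref{QP_Target} is what guarantees $Q_\mathcal{M}$ is genuinely tangent to $S[1]$, so that $\breve{Q}_\mathcal{M}$ is well defined on $\mathcal{F}$, is a worthwhile clarification of the same argument rather than a departure from it.
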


Now, similar to the proof of Theorem~\ref{CME}, it is a straightforward verification that the function $\hat{S}=-i_{Q_\mathcal{N}}\Theta$ with $\Theta:=\mathbb T\theta_\tau$ has $Q_\mathcal{N}$  as Hamiltonian vector field with respect to $\Omega:=\mathbb T\omega_\tau$. If in addition, the vector field $Q_{\mathcal{M}}$ is Hamiltonian for $S_P$, that is, $dS_P=i_{Q_\mathcal{M}}\omega_\tau$, we get the following result:
\begin{prop}
$\breve{Q_\mathcal{M}}$ is hamiltonian vector field of $\breve{S}_P:=\mu_*\mathrm{ev}^*S_P$ and $Q_\mathcal{F}$ is hamiltonian for $S_\mathcal{F}:=\hat{S}+\breve{S}_P$, i.e.
$$d\breve{S}_P=i_{\breve{Q}_\mathcal{M}}\Omega \mbox{\ \ and \ \ }dS_{\mathcal{F}}=i_{Q_\mathcal{F}}\Omega.$$
\end{prop}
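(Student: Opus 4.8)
The plan is to establish the two Hamiltonian identities in turn, deriving the second as an immediate consequence of the first combined with the already-proven relation $d\hat{S}=i_{\hat{Q}_{\mathcal N}}\Omega$ recorded just before the statement. The conceptual core is the compatibility of the transgression map $\mathbb{T}=\mu_*\mathrm{ev}^*$ with the evaluation-lift $\breve{Q}_\mathcal{M}$ of $Q_\mathcal{M}$.

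First I would prove $d\breve{S}_P=i_{\breve{Q}_\mathcal{M}}\Omega$. The key observation is that, under $\mathrm{ev}:\mathcal{N}\times\mathcal{F}\to\mathcal{M}$, $\mathrm{ev}(x,f)=f(x)$, the vector field on $\mathcal{N}\times\mathcal{F}$ that vanishes on the $\mathcal{N}$-factor and equals $\breve{Q}_\mathcal{M}$ on the $\mathcal{F}$-factor is $\mathrm{ev}$-related to $Q_\mathcal{M}$: its image under $d\,\mathrm{ev}$ at $(x,f)$ is $\breve{Q}_\mathcal{M}(f)(x)=Q_\mathcal{M}(f(x))=Q_\mathcal{M}(\mathrm{ev}(x,f))$, by the very definition of $\breve{Q}_\mathcal{M}$. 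Hence $i_{\breve{Q}_\mathcal{M}}\mathrm{ev}^*\alpha=\mathrm{ev}^*(i_{Q_\mathcal{M}}\alpha)$ for every form $\alpha$ on $\mathcal{M}$. Since $\breve{Q}_\mathcal{M}$ has no component along $\mathcal{N}$, it is vertical for the projection $P:\mathcal{N}\times\mathcal{F}\to\mathcal{F}$, so contraction passes through the Berezinian integration, $i_{\breve{Q}_\mathcal{M}}\mu_*\beta=\mu_*(i_{\breve{Q}_\mathcal{M}}\beta)$. Applying both identities with $\alpha=\omega_\tau$ and using the hypothesis $dS_P=i_{Q_\mathcal{M}}\omega_\tau$ yields $i_{\breve{Q}_\mathcal{M}}\Omega=\mathbb{T}(i_{Q_\mathcal{M}}\omega_\tau)=\mathbb{T}(dS_P)$. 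Finally, because $\mathbb{T}$ is a chain map (it is built from the pullback $\mathrm{ev}^*$ and the Berezinian push-forward $\mu_*$, which the excerpt records as a chain map), we get $\mathbb{T}(dS_P)=d\,\mathbb{T}S_P=d\breve{S}_P$, proving the first claim.

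The second claim then follows by linearity of $d$ and additivity of interior multiplication in the vector-field slot, using $Q_\mathcal{F}=\hat{Q}_{\mathcal N}+\breve{Q}_\mathcal{M}$ and $S_\mathcal{F}=\hat{S}+\breve{S}_P$:
\[
dS_\mathcal{F}=d\hat{S}+d\breve{S}_P=i_{\hat{Q}_{\mathcal N}}\Omega+i_{\breve{Q}_\mathcal{M}}\Omega=i_{Q_\mathcal{F}}\Omega,
\]
where the middle equality invokes the Hamiltonian identity for $\hat{S}$ recalled above and the first claim just established.

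The main obstacle I anticipate is not this skeleton but the bookkeeping in the graded, $\mathbb{R}^r$-valued setting: one must check that contraction by the possibly odd field $\breve{Q}_\mathcal{M}$ commutes with Berezinian integration up to the correct Koszul sign, and that the $\mathrm{ev}$-relatedness identity is preserved by the push-forward when $\omega_\tau$ is poly-symplectic rather than scalar. These are precisely the points handled by the graded AKSZ machinery (the analogue of Lemma~2.6 in \cite{CattaneoFelder3}), and I would invoke that result slotwise over the $r$ poly-symplectic components rather than re-derive the sign conventions by hand.
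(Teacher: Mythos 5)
Your proof is correct and takes essentially the route the paper itself relies on: the paper states this proposition without a written proof, presenting it as a consequence of the transgression machinery (the analogue of Theorem 3.9 / Lemma 2.6 of Cattaneo--Felder), and your argument -- ev-relatedness of $(0,\breve{Q}_\mathcal{M})$ with $Q_\mathcal{M}$, commutation of contraction with the push-forward $\mu_*$, the chain-map property of $\mathbb{T}=\mu_*\mathrm{ev}^*$, applied slotwise to the $\mathbb{R}^r$-valued form -- is exactly the ``straightforward verification'' the paper alludes to. Your derivation of the second identity by adding $d\hat{S}=i_{\hat{Q}_{\mathcal N}}\Omega$ to the first claim likewise matches the paper's intended decomposition $Q_\mathcal{F}=\hat{Q}_{\mathcal N}+\breve{Q}_\mathcal{M}$, $S_\mathcal{F}=\hat{S}+\breve{S}_P$.
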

Finally, by a similar argument as in Section~\ref{sec:PP}, we can define a graded $\mathbb{R}^r$-valued bracket on admissible functions on $\mathcal{F}$. In this case we have $Q_\mathcal{F}|_{\mathrm{Adm}}=(S_\mathcal{F},\cdot)$ and in particular $$Q_\mathcal{F}|_{\mathrm{Adm}}^2=(S_\mathcal{F},(S_\mathcal{F},\cdot)).$$ By the graded Jacobi identity, we can verify the ``poly-version" of the Classical Master Equation:
\begin{thm}
From a poly-Poisson structure $(S,P)$ on $M$ the function $S_\mathcal{F}$ satisfies CME, i.e $(S_\mathcal{F},S_\mathcal{F})=0$.
\end{thm}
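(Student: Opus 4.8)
The plan is to recognize this statement as the specialization of the Poly--Poisson CME (Theorem~\ref{PCME}) to the data $\mathcal{N}=T[1]\partial\Sigma$ with $Q_\mathcal{N}=d$, and $\mathcal{M}=S[1]$ with the cohomological vector field $Q_\mathcal{M}$ induced by $P$. First I would check that all hypotheses of Theorem~\ref{PCME} are in place: by Lemma~\ref{QP_Target} the pair $(S,P)$ being $r$-Poisson is exactly what makes $Q_\mathcal{M}$ cohomological on $S[1]$; the preceding propositions supply that $\omega_\tau=\tau^*\omega$ is exact with primitive $\theta_\tau=\tau^*\theta$ and that $(S[1],\omega_\tau)$ is a (weakly) exact $1$-degree $r$-poly-symplectic target; and the canonical Berezinian $\mu$ on $T[1]\partial\Sigma$ is $d$-invariant, which is the remaining hypothesis. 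Granting these, Theorem~\ref{PCME} applies verbatim and yields $(S_\mathcal{F},S_\mathcal{F})=0$ for $S_\mathcal{F}=\hat{S}+\breve{S}_P$.

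For a self-contained argument I would instead run the computation underlying Theorem~\ref{PCME} directly. Expanding by graded bilinearity,
\[
(S_\mathcal{F},S_\mathcal{F})=(\hat{S},\hat{S})+2(\hat{S},\breve{S}_P)+(\breve{S}_P,\breve{S}_P),
\]
and I would show each term vanishes. The term $(\hat{S},\hat{S})$ vanishes because $\hat{Q}_\mathcal{N}$ is Hamiltonian for $\hat{S}=-i_{\hat{Q}_\mathcal{N}}\Theta$ and cohomological, so $(\hat{S},\hat{S})=i_{[\hat{Q}_\mathcal{N},\hat{Q}_\mathcal{N}]}\Theta=0$. The term $(\breve{S}_P,\breve{S}_P)$ vanishes because transgression intertwines the target and mapping-space brackets, giving $(\breve{S}_P,\breve{S}_P)=\mu_{*}\mathrm{ev}^{*}(S_P,S_P)$, and $(S_P,S_P)=0$ is equivalent to $Q_\mathcal{M}$ being cohomological on $S[1]$. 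The cross term $(\hat{S},\breve{S}_P)$ is constant because $\hat{Q}_\mathcal{N}$ and $\breve{Q}_\mathcal{M}$ commute; the $Q_\mathcal{N}$-invariance of $\mu$ then forces it to vanish via Lemma~2.6 of \cite{CattaneoFelder3}, exactly as in the proof of Theorem~\ref{PCME}. Alternatively, following the hint preceding the statement, since $Q_\mathcal{F}$ is cohomological one has $0=Q_\mathcal{F}^2|_{\mathrm{Adm}}=(S_\mathcal{F},(S_\mathcal{F},\cdot))$, and the graded Jacobi identity for the $\mathbb{R}^r$-valued bracket rewrites this as $\frac{1}{2}((S_\mathcal{F},S_\mathcal{F}),\cdot)=0$ on admissible functions.

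The main obstacle is to make the $\mathbb{R}^r$-valued bracket behave like an honest odd Poisson bracket in each of these steps. Concretely, I would need to verify that transgression commutes with the vector-valued bracket (so that the identity $(\breve{S}_P,\breve{S}_P)=\mu_{*}\mathrm{ev}^{*}(S_P,S_P)$ is legitimate component by component) and that the graded Jacobi identity holds for the $\mathbb{R}^r$-valued structure built in Section~\ref{sec:PP}. A more delicate point, due to the structure being only \emph{weak} poly-symplectic, is the final implication: from $((S_\mathcal{F},S_\mathcal{F}),f)=0$ for all admissible $f$ one cannot invert $\omega^\sharp$ directly, so I would conclude $(S_\mathcal{F},S_\mathcal{F})=0$ instead by the three-term vanishing above, or by checking that $(S_\mathcal{F},S_\mathcal{F})$ sits in a degree in which the only element annihilated by all admissible Hamiltonians is zero.
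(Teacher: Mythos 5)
Your proposal is correct, and it in fact contains the paper's own proof as your final ``alternative'': the paper argues exactly that $Q_\mathcal{F}=\hat{Q}_\mathcal{N}+\breve{Q}_\mathcal{M}$ is cohomological precisely because $(S,P)$ is poly-Poisson (Lemma~\ref{QP_Target} plus the propositions preceding the statement), that $Q_\mathcal{F}$ is Hamiltonian for $S_\mathcal{F}=\hat{S}+\breve{S}_P$, and then concludes from $Q_\mathcal{F}|_{\mathrm{Adm}}^2=(S_\mathcal{F},(S_\mathcal{F},\cdot))=0$ together with the graded Jacobi identity for the $\mathbb{R}^r$-valued bracket on admissible functions. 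Where you differ is in what you put first, and one caveat matters: your preferred route --- ``Theorem~\ref{PCME} applies verbatim'' --- is not literally available. The target here is $S[1]$ with $\omega_\tau=\tau^*\omega$, the restriction of the canonical form of $\oplus_rT^*[1]M$ to a sub-bundle; this is at best \emph{weakly} non-degenerate and is not an exact $1$-degree $r$-poly-symplectic manifold in the paper's sense (no normal form as in Theorem~\ref{thm:poly-Schwarz}), which is exactly why the paper does not simply cite Theorem~\ref{PCME} but re-establishes its ingredients (Hamiltonianity of $\hat{S}$ and of $\breve{S}_P$, cohomologicality of $Q_\mathcal{F}$) as separate propositions in this restricted setting. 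Your self-contained second route repairs this gap: the expansion $(S_\mathcal{F},S_\mathcal{F})=(\hat{S},\hat{S})+2(\hat{S},\breve{S}_P)+(\breve{S}_P,\breve{S}_P)$ uses only those propositions, the commutation of $\hat{Q}_\mathcal{N}$ with $\breve{Q}_\mathcal{M}$, and the $Q_\mathcal{N}$-invariance of $\mu$ via Lemma~2.6 of \cite{CattaneoFelder3}, so it goes through and is the rigorous version of ``specialize Theorem~\ref{PCME}.'' Finally, your observation that weak non-degeneracy makes the last step of the Jacobi-identity argument delicate --- passing from $((S_\mathcal{F},S_\mathcal{F}),f)=0$ for all admissible $f$ to $(S_\mathcal{F},S_\mathcal{F})=0$ --- points at a step the paper passes over in silence; avoiding it by the three-term expansion (or by a degree argument) is a genuine sharpening of the argument as written.
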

The proof follows from the previous results, in particular the fact that $Q_\mathcal{F}$ is cohomological when $P:S\to TM$ defines a Lie algebroid structure, i.e. when $(S,P)$ is poly--Poisson.

Now we are able to justify Equation \eqref{PPSM_Action} from the AKSZ perspective. 
We define the total space of fields of PPSM as 
\begin{equation}
\F^{PPSM}=\mbox{Map}(T[1]\Sigma, T[1]^*M \otimes \mathbb R^r),
\end{equation}
and the space of boundary fields as
\begin{equation}
\F^{PPSM}_{\partial}=\mbox{Map}(T[1]\partial \Sigma, T[1]^*M \otimes \mathbb R^r).
\end{equation}
A straightforward computation shows that
\begin{equation}\label{Transgression_Source}
\hat{S}^{PPSM}= \langle \bm{\eta},  d\bm{X}\rangle
\end{equation}
and
\begin{equation}\label{Transgression_Target}
\breve{S}^{PPSM}= \frac 1 2 \langle P(\bm{X})\bm{\eta}, \bm{\eta} \rangle.
\end{equation}
Combining equations \eqref{Transgression_Source} and \eqref{Transgression_Target} we get 
\begin{equation}\label{PPSM_AKSZ_Action}
 \mathcal S^{PPSM}=\hat{S}^{PPSM}+\breve{S}^{PPSM}= \int_{\Sigma} \langle \bm{\eta}, dX \rangle + \frac 1 2 \langle P(\bm{X})\bm{\eta}, \bm{\eta} \rangle,
\end{equation}
where the pairing takes values in the vector space $\mathbb{R}^r$, and we recover the action functional in Equation \eqref{PPSM_Action}, in degree 0. Furthermore, by using Theorem \ref{PCME}
our desired result follows, namely:
\begin{thm}\label{thm:polyCME}
$\mathcal S^{PPSM}$ is a solution of the classical master equation (as in Theorem \ref{PCME}).
\end{thm}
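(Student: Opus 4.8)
The plan is to derive Theorem~\ref{thm:polyCME} as a direct specialization of Theorem~\ref{PCME}, the only real work being to exhibit $\mathcal S^{PPSM}$ as the total AKSZ action $\hat{S}+\breve{S}$ attached to an admissible source--target pair. First I would fix the data of Theorem~\ref{PCME}: take the source $Q$-manifold $(\mathcal{N},Q_\mathcal{N},\mu)=(T[1]\Sigma,d,\mu)$ with $\mu$ the canonical Berezinian, and the target the exact $1$-degree $r$-poly-symplectic manifold $\mathcal{M}=S[1]$ with $\omega_\tau=d\theta_\tau$, as set up just before \eqref{Transgression_Source}. With these choices the two halves of the action are, by construction, $\hat{S}^{PPSM}=-\iota_{\hat{Q}_\mathcal{N}}\Theta$ with $\Theta=\mathbb{T}\theta_\tau$, and $\breve{S}^{PPSM}=\mu_*\mathrm{ev}^*S_P$ with $S_P$ the target Hamiltonian of $Q_\mathcal{M}$.

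The substance of the proof is then the explicit transgression computation recorded in \eqref{Transgression_Source}--\eqref{Transgression_Target}: carrying out the Berezinian integration over the odd fibres of $T[1]\Sigma$ turns $-\iota_{\hat{Q}_\mathcal{N}}\Theta$ into $\langle\bm{\eta},d\bm{X}\rangle$ and turns $\mu_*\mathrm{ev}^*S_P$ into $\tfrac12\langle P(\bm{X})\bm{\eta},\bm{\eta}\rangle$, so that their sum $\mathcal S^{PPSM}$ reproduces, in degree $0$, the classical action \eqref{PPSM_Action}. I would verify these two identities in the local Darboux coordinates supplied by Theorem~\ref{thm:poly-Schwarz}, where $\omega_\tau=\sum_{j,l}dq_j^l\,dp_j$ and the source one-form and target Hamiltonian take their standard expressions; this is the poly-symplectic, $\mathbb{R}^r$-valued refinement of the corresponding computation in \cite{CattaneoFelder3}.

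Once $\mathcal S^{PPSM}$ is identified with $\hat{S}+\breve{S}$ I would simply invoke Theorem~\ref{PCME}. Its hypotheses are met: exactness and the poly-symplectic structure of $S[1]$ come from Theorem~\ref{thm:poly-Schwarz} and the pullback along $\tau$; that $Q_\mathcal{M}$ is a cohomological vector field admitting the Hamiltonian $S_P$ is exactly Lemma~\ref{QP_Target}, valid precisely because $(S,P)$ is $r$-Poisson; and $Q_\mathcal{N}=d$ with $d$-invariant $\mu$ gives the source $Q$-structure. The conclusion $(\hat{S}+\breve{S},\hat{S}+\breve{S})=0$ is then immediate, whence $(\mathcal S^{PPSM},\mathcal S^{PPSM})=0$.

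The step I expect to be the main obstacle is the control of boundary contributions. Since $\Sigma=D^2$ has boundary, the $Q_\mathcal{N}$-invariance of the Berezinian — the hypothesis that powers the vanishing of the cross bracket $(\hat{S},\breve{S})$ in Theorem~\ref{PCME} — holds only up to a boundary integral, and the identity $\Lie_{\hat{Q}_\mathcal{N}}\Theta=0$ must be read on the constrained field space cut out by the alternating PPSM boundary conditions of Section~\ref{sec:PP}. Making precise that these boundary conditions are exactly what annihilates the boundary term, so that Theorem~\ref{PCME} applies verbatim in degree $0$, is the delicate point; the transgression identities and the remaining bracket computations are otherwise formal.
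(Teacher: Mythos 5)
Your proposal follows essentially the same route as the paper: identify $\mathcal S^{PPSM}$ with $\hat{S}+\breve{S}$ via the transgression computations \eqref{Transgression_Source}--\eqref{Transgression_Target}, verify the hypotheses of Theorem~\ref{PCME} through Lemma~\ref{QP_Target} (cohomological $Q_{\mathcal M}$ on $S[1]$ precisely because $(S,P)$ is $r$-Poisson) and the surrounding propositions, and then invoke that theorem. The boundary-term subtlety you flag as the delicate point is genuine but is not treated in the paper either --- the paper works with the poly-symplectic structure on the boundary fields $\mathrm{Map}(T[1]\partial\Sigma, S[1])$ (where $\partial\Sigma$ is closed) and cites Theorem~\ref{PCME} without discussing bulk boundary contributions, so your account is, if anything, more careful on that point.
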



\begin{thebibliography}{99}
\bibitem{AKSZ} M. Alexandrov, M. Kontsevich, A. Schwarz, O. Zaboronsky, \emph{ The geometry of the master equation and 
topological quantum field theories}, Internat. J. Mod. Phys. A12, 1405-1430, 1997.
\bibitem{Canarutto} D. Canarutto, \emph{Covariant formulation of Lagrangian field theory}, International Journal of Geometric Methods in Modern Physics (2018)
\bibitem{Caratheodory} C. Carath\'eodory, {\it Uber die Extremalen und geod\"atischen Felder in der Variationsrechnung der mehrfachen Integrale}, Acta Sci. Math. (Szeged) {\bf 4} (1929), 193--216.
\bibitem{CattaneoContreras} A. Cattaneo and I. Contreras, {\sl Relational symplectic groupoids}, Letters in Mathematical Physics, Volume 105, {\bf 5}, (2015), pp 723$-$767.
\bibitem{CattaneoFelder} A.S. Cattaneo and G. Felder, \emph{ Poisson sigma models and symplectic groupoids}, in Quantization of Singular Symplectic Quotients, (ed. N. P. Landsman, M. Pflaum, M. Schlichenmeier), Progress in Mathematics 198,Birkh\"auser, 61-93 (2001).
\bibitem{CattaneoFelder2} A. S. Cattaneo and G. Felder, \emph{A path integral approach to the Kontsevich quantization formula}, Commun. Math. Phys. 212 (2000), pp. 591--611.

\bibitem{CattaneoFelder3} A. S. Cattaneo and G. Felder. \emph{On the AKSZ formulation of the Poisson sigma model}, Lett. Math. Phys. 56.2 (2001), pp. 163--179.
\bibitem{CattaneoSchaetz} A. Cattaneo and F. Sch\"atz, \emph{Introduction to Supergeometry}, Rev. Math. Phys. Vol. 23, No. 06, pp. 669--690 (2011)

\bibitem{IvanThesis} I. Contreras, \emph{ Relational Symplectic Groupoids and Poisson Sigma Models with Boundary}, arXiv:1306.3943, PhD Thesis, preprint (2013).
\bibitem{PPSM} I. Contreras and N. Martinez Alba, \emph{Poly-Poisson Sigma Models and Their Relational Poly-Symplectic Groupoids},  Journal of Mathematical Physics, Vol. 59. Issue 7 (2018).
\bibitem{deDonder} T. de Donder, {\it Th\'eorie invariante du calcul des variations}, Nuov.  ́\'ed. (Gauthiers--Villars), Paris 1935.
\bibitem {Finet} C. Finet, L.Quarta, C. Troestler,\emph{Vector-valued variational principles}, Nonlinear Analysis: Theory, Methods and Applications (2003) 
\bibitem{FoGo} M. Forger, L. Gomes, {\it Multisymplectic and Polysymplectic structures on fiber bundles,}
Rev. Math. Phys. {\bf 25} No.9 (2013).

\bibitem{Gunther}C. Gunther, \textit{The polysymplectic Hamiltonian formalism in field theory and calculus of variations. I. The local case,} J. Differential Geom. Volume \textbf{25}.
\bibitem{IMV} D. Iglesias, J.C Marrero, M. Vaquero, {\it Poly-Poisson Structures,} Lett. Math. Phys. Volume 103 (2013) 1103-1133.
\bibitem{Manin} Yu. Manin, \textit{Gauge fields and complex geometry}, Springer-Verlag, Berlin, 1997
\bibitem{NMA-lmp}N. Martinez, {\it Poly-symplectic groupoids and Poly-Poisson structures,} Lett. Math. Phys., Volume 105, (2015), 693--721.
\bibitem{NMA} N. Mart\'inez Alba, {\it On higher Poisson and higher Dirac
structures}, PhD Thesis, IMPA, 2015.
\bibitem{Roy} D. Roytenberg, {\it On the structure of graded symplectic supermanifolds and Courant algebroids}, Quantization, Poisson Brackets and Beyond, Theodore Voronov (ed.), Contemp. Math., Vol. 315, Amer. Math. Soc., Providence, RI, 2002

\end{thebibliography}
\end{document}